\crefname{equation}{Eq.}{Eqs.}
\Crefname{equation}{Eq.}{Eqs.}
\newtheorem{theorem}{Theorem}
\newtheorem{remark}{Remark}
\tikzset{
  dim above/.style={to path={\pgfextra{
        \pgfinterruptpath
        \draw[>=latex,|<->|] let
        \p1=($(\tikztostart)!2mm!90:(\tikztotarget)$),
        \p2=($(\tikztotarget)!2mm!-90:(\tikztostart)$)
        in(\p1) -- (\p2) node[pos=.5,sloped,above]{#1};
        \endpgfinterruptpath
      }(\tikztostart) -- (\tikztotarget) \tikztonodes
    }
  },
  dim below/.style={to path={\pgfextra{
        \pgfinterruptpath
        \draw[>=latex,|<->|] let 
        \p1=($(\tikztostart)!2mm!90:(\tikztotarget)$),
        \p2=($(\tikztotarget)!2mm!-90:(\tikztostart)$)
        in (\p1) -- (\p2) node[pos=.5,sloped,below]{#1};
        \endpgfinterruptpath
      }(\tikztostart) -- (\tikztotarget) \tikztonodes
    }
  },
}
\title{Physics-Informed Holomorphic Neural Networks (PIHNNs): Solving Linear Elasticity Problems}
\author{
  \href{https://orcid.org/0009-0007-4426-7240}{\includegraphics[scale=0.06]{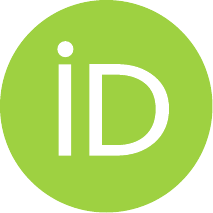}\hspace{1mm} Matteo Calafà\thanks{Corresponding author}} \\
  Dept. of Mechanical and Production Engineering \\
  Aarhus University \\
  Aarhus, Denmark\\
  \texttt{maca@mpe.au.dk} \\
  \And
  \href{https://orcid.org/0000-0003-3055-3522}{\includegraphics[scale=0.06]{orcid.pdf}\hspace{1mm} Emil Hovad} \\
  AI Lab \\
  Alexandra Instituttet \\
  Copenhagen, Denmark\\
  \texttt{emil.hovad@alexandra.dk} \\
   \And
      \href{https://orcid.org/0000-0001-8626-1575}{\includegraphics[scale=0.06]{orcid.pdf}\hspace{1mm} Allan P. Engsig-Karup} \\
    Dept. of Applied Mathematics and Computer Science \\
    Technical University of Denmark \\
   Kongens Lyngby, Denmark \\
   \texttt{apek@dtu.dk} \\
   \And 
  \href{https://orcid.org/0000-0002-1873-0031}{\includegraphics[scale=0.06]{orcid.pdf}\hspace{1mm}Tito Andriollo}\\
  Dept. of Mechanical and Production Engineering \\
  Aarhus University \\
  Aarhus, Denmark\\
  \texttt{titoan@mpe.au.dk} \\
}
\begin{document}
\maketitle


\begin{abstract}
We propose physics-informed holomorphic neural networks (PIHNNs) as a method to solve boundary value problems where the solution can be represented via holomorphic functions. Specifically, we consider the case of plane linear elasticity and, by leveraging the Kolosov-Muskhelishvili representation of the solution in terms of holomorphic potentials, we train a complex-valued neural network to fulfill stress and displacement boundary conditions while automatically satisfying the governing equations. This is achieved by designing the network to return only approximations that inherently satisfy the Cauchy-Riemann conditions through specific choices of layers and activation functions. To ensure generality, we provide a universal approximation theorem guaranteeing that, under basic assumptions, the proposed holomorphic neural networks can approximate any holomorphic function. Furthermore, we suggest a new tailored weight initialization technique to mitigate the issue of vanishing/exploding gradients. Compared to the standard PINN approach, noteworthy benefits of the proposed method for the linear elasticity problem include a more efficient training, as evaluations are needed solely on the boundary of the domain, lower memory requirements, due to the reduced number of training points, and $C^\infty$ regularity of the learned solution. Several benchmark examples are used to verify the correctness of the obtained PIHNN approximations, the substantial benefits over traditional PINNs, and the possibility to deal with non-trivial, multiply-connected geometries via a domain-decomposition strategy.
\end{abstract}

\keywords{Complex-valued neural networks \and Linear elasticity \and Physics-informed neural networks \and Scientific machine learning \and Kolosov-Muskhelishvili representation}

\section{Introduction}
The resolution of the differential equations of linear elasticity is essential in numerous fields, ranging from mechanical to civil engineering, for estimating the load carrying capacity of components and structures based on the stress distribution within the material \cite{gould1994introduction}. Despite analytical solutions are available for some simple problems, realistic applications involve complex geometries that typically require the use of numerical methods for finding an approximate solution. The finite element method is the most widely used technique in this respect, although several other techniques, e.g. spectral element methods \cite{xu2018spectral}, isogeometric analysis \cite{hughes_isogeometric_2005,chasapi2022isogeometric}, the boundary element method \cite{gu2001coupled}, exist. 

On the other hand, the growth of physics-informed machine learning and data-driven methods recently encouraged the renewed adoption of artificial neural networks (ANNs) complementing the more traditional numerical tools used for engineering applications \cite{karniadakis2021physics}. Early studies on using neural networks for solving differential equations were presented in the works of \citet{LEE1990110}, \citet{Dissanayake1994} and \citet{lagaris1998artificial}. More recently, these ideas were popularized through the introduction of so-called physics-informed neural networks (PINNs) \cite{raissi2019physics} that since then have grown to play a prominent role as a basis for incorporating first-principle mathematical-physical models via automatic differentiation and modern software frameworks for deep learning such as PyTorch \cite{paszke2019pytorch} and related open source PINN software, e.g. SciANN \cite{haghighat2021sciann} and DeepXDE \cite{LuEtAl2021}. In the simplest form of PINN approach, a neural network is adopted as the global ansatz function for the solution to a given boundary value problem where its weights are optimized in the training stage by minimizing a loss function that includes the residuals of the governing differential equations as well as the boundary conditions (BCs). One of the main advantages from this choice is that solutions can be obtained even when available data are scarce by leveraging information from mathematical-physical models. A second promising advantage is the advancement of transfer learning strategies which, in contrast to common numerical methods, enable quick solution finding for problems with similar geometries \cite{xu_transfer_2023}. 

Furthermore, use of PINNs in 2D linear elasticity problems have already been shown in several recent works \cite{vahab2022physics,henkes_physics_2022,zhang_analyses_2022,xu_transfer_2023,roy_deep_2023,rezaei_mixed_2022}, demonstrating great ability at approximating the correct solutions. On the other hand, these studies also discuss some of the concerning limitations, including high training and memory costs for complex problems with many collocation points as well as the challenge of fine-tuning the network hyper parameters, e.g., balancing the contributions of the differential equation residuals and of the BCs in the loss function when these are softly imposed or, alternatively, finding methods to ensure ``hard'' fulfillment of the BCs as in \citet{berg2018unified,rao2021physics}. 

On a different track, the scientific community also extended deep learning architectures from real numbers to the complex field and studied their potential. Research on \emph{complex-valued neural networks} (CVNNs) spanned over a longer period of time, from early works conducted in the 1990s \cite{arena1993capability,arena1995multilayer} to very recent studies such as \cite{voigtlaender2023universal}. Despite their possible useful employment in signal processing \cite{hirose2012generalization} and image analysis \cite{tygert2016mathematical}, scientific research on CVNNs is still very limited compared with the extensive literature on standard ANNs, e.g. see \citet{LeCunEtAl2015}. As evidence of this, universal approximation properties of real-valued ANNs have been assessed and determined from classical works such as \citet{HORNIK1989359,cybenko1989approximation}. In contrast, equivalent properties for CVNNs seem to have been established only recently by \citet{voigtlaender2023universal} by revisiting the results from \citet{kim2003approximation}. In conclusion, the literature on CVNNs is rapidly growing, however, some knowledge is not consolidated yet and deserves further investigation.

Considering the two families of networks just presented, the use of CVNNs within a physics-informed framework analogous to that of PINNs has been recently proposed for solving the 2D Laplace equation by \citet{ghosh2023harmonic}. The rationale was to exploit the complex representation to introduce an inductive bias leading to consistent computational advantages. Specifically, the authors leveraged the fact that, for simply-connected domains, any harmonic function can be represented as the real part of a holomorphic function. By designing a mixed real-valued / complex-valued neural network and enforcing the output to be harmonic by construction, a network representation was obtained that satisfied the Laplace equation automatically. Consequently, the learning process was limited to finding the set of network weights that allowed satisfying the boundary conditions. 

Following a similar idea in this work, we propose to utilize CVNNs to solve the governing equations of 2D linear elasticity. The starting point is to note that problems within the realm of either plane stress or plane strain linear elasticity can be formulated in terms of the 2D bi-harmonic equation \cite{gould1994introduction}, whose solution, according to the \emph{Goursat} theorem \cite{fosdick1970complete}, can be represented by means of two complex holomorphic functions. These two functions can be related to the stresses and displacements at any point in the domain via closed-form expressions known as the Kolosov-Muskhelishvili formulae \cite{muskhelishvili1977}. By approximating each of these holomorphic functions with a CVNN that is holomorphic by construction, the resolution of the linear elastic problem is reduced to finding the network weights that allow the boundary conditions to be satisfied, leading to significant gains in terms of computational time and memory usage compared to standard real-valued PINNs. We emphasize that, in contrast to the work of \citet{ghosh2023harmonic} where the function approximated by the network is real-valued, the functions to be approximated in the present case are complex holomorphic. Therefore, we henceforth refer to the approach we propose as the physics-informed holomorphic neural network (PIHNN) approach.

The main content in this study is organized as follows: \Cref{sec:complexrepresentationoflinearelasticityequations} shortly presents the linear elasticity equations and their complex representation through the Kolosov-Muskhelishvili formulae. \Cref{sec:holomorphicneuralnetworks} introduces some concepts of CVNNs before addressing the more specific case of holomorphic neural networks. In this respect, \Cref{sec:universalapproximationtheorem} provides a universal approximation property which justifies the choices for the network architecture discussed in \Cref{sec:networkarchitecture}. Finally, \Cref{sec:benchmarkexamples} contains different examples and tests to assess the quality and performance of the proposed methods.

\section{Governing equations : complex representation of linear elasticity equations}\label{sec:complexrepresentationoflinearelasticityequations}
Let $\Omega \subset \mathbb{R}^2$ be the bounded area representing a homogeneous and isotropic solid. Neglecting body forces, the equations of linear elasticity can be written as \cite{gould1994introduction}
\begin{equation}
\begin{cases}
    \displaystyle \nabla \cdot \bm{\upsigma} = \bm{0},\\ 
    \displaystyle \bm{\upsigma} = 2\mu \bm{\upvarepsilon} + \Tilde{\lambda} \text{Tr}(\bm{\upvarepsilon}) \mathbf{I},\\ 
    \displaystyle \bm{\upvarepsilon} = \frac{\nabla \bm{u} + \nabla \bm{u}^T}{2},
    \label{eq:linearelasticity}
\end{cases}
\end{equation}
where $\bm{\upsigma}$ is the Cauchy stress tensor, $\bm{\upvarepsilon}$ is the infinitesimal strain tensor, $\bm{u}$ is the displacement vector, $\mathbf{I}$ is the 2$^{nd}$ order identity tensor and $\text{Tr}(\cdot)$ denotes the trace operator. Furthermore, $\mu$ is the shear modulus and the parameter $\Tilde{\lambda}$ is related to the Lamé first parameter $\lambda$ by the expression $\Tilde{\lambda} = \lambda $ for plane strain and  $\Tilde{\lambda} = \left(2 \lambda \mu\right)/\left(\lambda + 2 \mu \right) $ for plane stress. \\
Let $\Gamma_n,\Gamma_d\subset\mathbb{R}^2$ be such that $\overline{\partial\Omega} = \overline{\Gamma_n} \cup \overline{\Gamma_d} $ and $\Gamma_n\cap \Gamma_d = \emptyset$. Then, suitable boundary conditions for the system of equations \eqref{eq:linearelasticity} are
\begin{equation}
\label{eq:BC}
    \begin{cases}
        \bm{\sigma} \cdot \bm{n} = \bm{t_0}, &\hspace{5mm} \text{ on } \Gamma_n, \\
    \bm{u} = \bm{u}_0, &\hspace{5mm} \text{ on } \Gamma_d, 
    \end{cases}
\end{equation}
where $\bm{n} $ is the outward unit normal at boundaries and $\bm{t_0}$ and $\bm{u}_0$ are functions representing prescribed values of the surface traction and boundary displacement, respectively. 

Let $\Omega_\mathbb{C}:=\{x+iy: (x,y)\in \Omega\}$ be the representation of $\Omega$ on the complex plane and $H(\Omega_\mathbb{C})$ be the set of holomorphic functions on $\Omega_\mathbb{C}$. Furthermore, let $ u_x, u_y $ and $\sigma_{xx},\sigma_{yy},\sigma_{xy}$ be the Cartesian components of $\bm{u} $ and $\bm{\upsigma}$, respectively. The Kolosov-Muskhelishvili representation reads as \cite{muskhelishvili1977}
\begin{equation}\label{eq:stressesholomorphic}
    \begin{cases}
    \sigma_{xx} = \text{Re}\left(2\varphi' - \overline{z}\varphi''-\psi'\right), \\
    \sigma_{yy} = \text{Re}\left(2\varphi' + \overline{z}\varphi''+\psi'\right), \\
    \sigma_{xy} = \text{Im}\left(\overline{z}\varphi''+\psi'\right), \\
    u_x = \frac{1}{2\mu}\text{Re}\left(\gamma \varphi - z \overline{\varphi'} - \overline{\psi}\right), \\ 
    u_y = \frac{1}{2\mu}\text{Im}\left(\gamma \varphi - z \overline{\varphi'} - \overline{\psi}\right), \\ 
    \end{cases}
    \hspace{4mm} \text{ for some } \varphi,\psi \in H(\Omega_\mathbb{C}),
\end{equation}
where $\gamma := \frac{\lambda + 3\mu}{\lambda +\mu}$, $z=x+iy$, $\overline{z}=x-iy$ is the complex conjugate and we denote with $\left(\cdot\right)'$ the complex differentiation with respect to $z$ (which exists since the complex functions are holomorphic). As the above expressions show that the stress and displacement components can be directly inferred from $\varphi$ and $\psi$, the original boundary value problem, \Cref{eq:linearelasticity,eq:BC}, can be replaced with the alternative formulation:
\begin{equation*}
    \text{ Find } \varphi,\psi \in H(\Omega_\mathbb{C}) \text{ such that \Cref{eq:BC} is satisfied through \Cref{eq:stressesholomorphic}}. 
\end{equation*}
That is, by leveraging the complex representation in \Cref{eq:stressesholomorphic}, the linear elasticity problem reduces to finding two holomorphic functions that allow satisfying the boundary conditions. The key idea of the present work is to accomplish this task by training two holomorphic neural networks jointly, as discussed in the next section.

\section{Holomorphic Neural Networks}\label{sec:holomorphicneuralnetworks}
The adoption of CVNNs over real-valued ANNs has proven essential in different applications when it is required to treat complex-valued data \cite{arena1993capability,arena1995multilayer,hirose2012generalization} or when complex techniques can improve some features of ANNs with real-valued data \cite{arjovsky2016unitary,tygert2016mathematical}. In order to motivate the choices and results of this work, we need to present some general aspects of CVNNs. A comprehensive review on CVNNs goes, however, beyond the scope of this research, and we address the interested readers to \cite{bassey2021survey,lee2022complex} for further details. 

Some components of standard ANNs, e.g., fully-connected and convolutional layers, can be easily extended to the complex field by considering the underlying tensors as the sum of a real and imaginary part. The choice of activation functions and optimization algorithms necessitates instead a more careful consideration. Indeed, a standard gradient-based learning algorithm requires the evaluation of the complex derivatives of the activation functions which, in principle, are well-defined only if these functions are complex differentiable, i.e., \emph{holomorphic}. However, as a consequence of Liouville's theorem, there are no activation functions that are both holomorphic on the entire plane and bounded, except for constant functions. 

This is one of the main reasons why \emph{Wirtinger} derivatives have been preferred over the years with respect to the standard complex derivative. Wirtinger calculus requires indeed differentiability only on $x,y$ separately so that \emph{split} activation functions, i.e., functions in the form
\begin{equation}\label{eq:splitactivation}
    \phi:\mathbb{C}\rightarrow\mathbb{C}, \hspace{3mm} \phi(x+iy) := \tilde{\phi}(x) + i\tilde{\phi}(y),
\end{equation}
where $\tilde{\phi}\in C^1(\mathbb{R})$, became very common. The reason can be explained with an example: the sigmoid function $\tilde{\phi}(x)=1/(1+e^{-x})$ is often used in real ANNs since it is bounded and monotonic; then, one might expect that its complex extension $\phi(z)=1/(1+e^{-z})$ inherits the same properties. However, this function is singular at $z=i(1+2k)\pi$ for every $k\in\mathbb{Z}$. On the other hand, $\phi$ defined as in \Cref{eq:splitactivation} is bounded, monotonic and the derivatives with respect to $x$ and $y$ are well-defined despite the Cauchy-Riemann equations are almost never satisfied. 

Another problematic aspect is that the set of holomorphic functions is very restricted and so is the space of functions that can be approximated by them. This set is indeed closed under the uniform norm, as already pointed out by \citet{arena1993capability}, as a consequence of Morera's theorem. In other words, a holomorphic neural network can only approximate holomorphic functions. The same article also shows that CVNNs with activation functions in the form of \Cref{eq:splitactivation} can approximate every continuous complex function. Some years later, the work of \citet{kim2003approximation} extended this characterization by considering other types of activation functions. More recently, the topic of universal approximation for CVNNs was revisited by \citet{voigtlaender2023universal}. While this last study confirmed some important results, for some types of activation functions it is not yet possible to determine whether they satisfy the universal approximation property or not, and thus the characterization is not yet complete at the present time, as explained in the next section.

\subsection{Universal approximation theorem }\label{sec:universalapproximationtheorem}
As mentioned above, one of the main current scientific discussions on CVNNs arises from the universal approximation properties on $C(\Omega_\mathbb{C})$. However, our work concerns the approximation on the smaller set $H(\Omega_\mathbb{C})$ and, therefore, different considerations need to be made. 

Let us call \emph{holomorphic neural networks} (HNNs) the subclass of CVNNs whose activation functions are all holomorphic. Since the output is computed through a combination of linear operations and activation functions, it also follows that a HNN with fixed weights is itself a holomorphic function. In addition, we recall that HNN networks can only approximate holomorphic functions, but we wonder whether \emph{all} holomorphic functions can in fact be approximated. 

An apparently similar result was stated in \cite{ghosh2023harmonic} where it is claimed that any holomorphic function can be approximated uniformly on a compact simply-connected set for some ``polynomial choices'' of the holomorphic neural network. However, it is not verified whether a HNN can generate every polynomial or not. 

Therefore, with the next theorem we aim to provide a complete criterion for the power of representation of shallow holomorphic neural networks.

\begin{theorem}\label{theo:approximation}
Let $\phi \in H(\mathbb{C})$ be an entire non-polynomial activation function. Furthermore, let $D\subset \mathbb{C}$ be a complex simply-connected domain. Then, a 1-layer neural network with activation function $\phi$ satisfies the universal approximation property compactly on the space of $H(D)$ functions. Namely,
\begin{equation*}
    \forall g \in H(D), \hspace{2mm} \forall K \subset\subset D, \hspace{2mm} \forall \varepsilon >0, \hspace{3mm} \exists N \in \mathbb{N} \hspace{2mm} \{a_j,b_j,c_j\}_{j=1}^{N} \subset \mathbb{C}: \hspace{2mm} \sup_{z \in K}|g(z)-\tilde{g}(z)| < \varepsilon,
\end{equation*}
where
\begin{equation*}
    \tilde{g}(z) := \sum_{j=1}^{N} a_j \phi(b_j z + c_j).
\end{equation*}
\end{theorem}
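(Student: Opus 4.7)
The plan is to reduce the claim to two classical ingredients: (1) polynomials are dense in $H(D)$ in the compact-open topology when $D$ is simply connected, and (2) every monomial $z^k$ can be uniformly approximated on compacts by linear combinations of $\phi(b_j z + c_j)$. Step (1) is a standard consequence of Runge's theorem (for any compact $K \subset\subset D$, the bounded components of $\mathbb{C}\setminus K$ sit inside $D$ because $D$ is simply connected, so the polynomially convex hull of $K$ lies in $D$ and Runge applies to any $g\in H(D)$). Given $\varepsilon>0$ and $K\subset\subset D$, this produces a polynomial $p$ with $\sup_K|g-p|<\varepsilon/2$, so it suffices to show that each monomial appearing in $p$ can itself be approximated to within a total error $\varepsilon/2$ on $K$ by a sum of the required form.

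For the core step (2), I would first use that $\phi$ is entire and non-polynomial to pick a good center. Each derivative $\phi^{(k)}$ is a non-trivial entire function (if some $\phi^{(k)}\equiv 0$, then $\phi$ would be a polynomial of degree $<k$), hence its zero set is discrete. The countable union $\bigcup_{k\ge 0}\{c\in\mathbb{C}:\phi^{(k)}(c)=0\}$ is therefore meager, so I may fix $c_0\in\mathbb{C}$ with $\phi^{(k)}(c_0)\neq 0$ for every $k\ge 0$. For fixed $z$, the entire map
\begin{equation*}
b\mapsto \phi(bz+c_0)=\sum_{k=0}^{\infty}\frac{\phi^{(k)}(c_0)}{k!}\,b^k z^k
\end{equation*}
has $k$-th derivative at $b=0$ equal to $\phi^{(k)}(c_0)\,z^k$. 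Writing this derivative as a standard $k$-th finite difference,
\begin{equation*}
\phi^{(k)}(c_0)\,z^k \;=\; \lim_{h\to 0}\; \frac{1}{h^k}\sum_{j=0}^{k}(-1)^{k-j}\binom{k}{j}\phi(jhz+c_0),
\end{equation*}
I recover $z^k$ as a limit of sums of the form $\sum_j a_j\phi(b_j z+c_0)$ after dividing by $\phi^{(k)}(c_0)$.

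To make (2) quantitative and uniform in $z\in K$, I would estimate the Taylor remainder of $b\mapsto \phi(bz+c_0)$. Because $\phi$ is entire, the set $\{bz+c_0 : |b|\le h_0,\,z\in K\}$ is bounded, so $\phi$ and all its derivatives are uniformly bounded there, giving a standard remainder bound of the form $O(h^{k+1})$ uniform in $z\in K$. Thus the finite-difference approximation to $z^k$ converges uniformly on $K$ as $h\to 0$, and linearity lets me approximate any polynomial of any prescribed degree to within $\varepsilon/2$ uniformly on $K$. Combining with step (1) yields the theorem.

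The main obstacle is the careful bookkeeping in step (2): one must ensure the finite-difference error is controlled \emph{uniformly in $z\in K$} and not just pointwise, and one must argue that a single shift $c_0$ can serve all monomial degrees simultaneously. The countable-meager-set argument resolves the latter cleanly, and the entirety of $\phi$ makes the former a routine Taylor estimate. Everything else (polynomial density on simply-connected domains via Runge, linearity of the network ansatz) is classical.
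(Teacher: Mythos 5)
Your proof is correct, but it takes a genuinely different route from the paper's. You decompose the problem into (i) polynomial density in $H(D)$ via Runge's theorem (using that the polynomially convex hull of $K$ stays inside the simply connected $D$) and (ii) emulation of each monomial $z^k$ by the network, obtained from the $k$-th finite difference of $b\mapsto\phi(bz+c_0)$ at $b=0$, which recovers $z^k\phi^{(k)}(c_0)$ uniformly on $K$ because the arguments $jhz+c_0$ stay in a fixed compact set where $\phi^{(k+1)}$ is bounded. This is the complex-analytic analogue of the classical real-variable argument for non-polynomial smooth activations, with Runge playing the role of Weierstrass. The paper instead invokes an overconvergence result of Luh to reduce to functions whose truncated Taylor expansions converge compactly, matches the first $N$ Taylor coefficients of the network to those of $g$ by solving a Vandermonde system with the $b_j$ chosen as $N$-th roots of unity, and then controls the tail of the network's own Taylor series with Cauchy estimates. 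Your argument is shorter and more modular, avoiding both the tail estimate and the appeal to Luh; what the paper's construction buys is the explicit parameter bounds recorded in \Cref{rmk:bounds} ($|b_j|=1$, $|c_j|\le|\xi|+|z_0|$, and $\|V^{-1}\|_\infty=1$ so that $|a_j|$ is controlled by $|g^{(j)}(z_0)/\phi^{(j)}(\xi)|$), which feed into the later discussion of activation functions and weight initialization. Your finite-difference coefficients behave like $\binom{k}{j}h^{-k}$ and blow up as $h\to0$, so they would not support that remark. Both proofs use the non-polynomial hypothesis in exactly the same place: to find a single shift at which no derivative of $\phi$ vanishes.
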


\begin{proof} 
By choosing any point $z_0 \in D$, a result from \cite{luh1986universal} states that there exist functions $g\in H(D)$ such that subsequences of their Taylor expansions around $z_0$ convergence compactly on $D$ to $g$. Let us call $H_{z_0}(D)$ the non-empty set that contains functions with the mentioned property. \\
Then, let us consider $g \in H_{z_0}(D)$ and $\Lambda \subset \mathbb{N}_0$ the subsequence of indexes that allows the power series to converge compactly to $g$, i.e., 
\begin{equation*}
g(z) = \sum_{n=0}^{\infty} g_n (z-z_0)^n \text{ compactly on $z\in D$, where} \hspace{2mm}
    g_n := \begin{cases}
    \frac{g^{(n)}(z_0)}{n!} & n \in \Lambda, \\
    0 & n \in \mathbb{N}_0\backslash \Lambda,
    \end{cases}
\end{equation*}
and the superscript '$(n)$' denotes the $n$-th derivative. 

If $\phi$ is not a polynomial then all its derivatives $\phi^{(n)}$ ($n=0,1,\dots$) are non-zero entire functions. Therefore, a classical result states that they can only have isolated zeros (see, for instance, \cite{rudin1987real}, Theorem 10.18). Let us call $W_n \subset \mathbb{C}$ the set containing the zeros of the $n$-th derivative. Thus, 
\begin{equation*}
    W:= \bigcup_{n \in \mathbb{N}_0} W_n,
\end{equation*}
is a countable union of discrete sets which implies that it has zero Lebesgue measure. Thus, $\mathbb{C}\backslash W$ is not empty and therefore there exists a value $\xi$ such that $\phi^{(n)}(\xi) \neq 0$, $\forall n \in \mathbb{N}_0$. 

We choose $c_j = \xi -b_jz_0$ by arbitrariness of $c_j$. By definition,
\begin{equation*}
    \tilde{g}^{(n)}(z_0) = \sum_{j=1}^{N} a_jb_j^{n} \phi^{(n)}(\xi), \hspace{4mm} n=0,1,\dots.
\end{equation*}
Furthermore, we define $\tilde{g}_n:=\tilde{g}^{(n)}(z_0)/n!$ the $n$-th term in the expansion of $\tilde{g}$ around $z_0$ and $\phi_n:=\phi^{(n)}(\xi)/n!$ the $n$-th term in the expansion of $\phi$ around $\xi$.
We aim to find the network parameters such that 
\begin{equation}\label{eq:mainequation}
    \tilde{g}_n=g_n, \hspace{5mm} \forall n=0,1,\dots,N-1.
\end{equation}
Namely, this condition reads as a linear system $V\mathbf{a} = \mathbf{s}$ where
\begin{equation*}
    V= 
    \begin{bmatrix}
    1 & 1 & \dots & 1\\
    b_1 & b_2 & \dots & b_N \\
    \dots & \dots & \dots & \dots \\
    b_1^{N-1} & b_2^{N-1} & \dots & b_{N}^{N-1}
    \end{bmatrix}
    \in \mathbb{C}^{N \times N},
    \hspace{5mm}
    \mathbf{a}=\begin{bmatrix}
        a_1 \\ a_2 \\ \dots \\ a_{N}
    \end{bmatrix}\in \mathbb{C}^N,
        \hspace{5mm}
    \mathbf{s}=\begin{bmatrix}
        g_0/\phi_0 \\g_1/\phi_1 \\ \dots \\ g_{N-1}/\phi_{N-1}
    \end{bmatrix}\in \mathbb{C}^N.
\end{equation*}
$V$ is a \emph{Vandermonde} matrix which is known to be non-singular if and only if $\{b_j\}_{j=1}^N$ are all distinct. Moreover, choosing the $b_j$ parameters as the $N$ complex roots of unity $e^{2\pi ij/N}$ results in $\|V^{-1}\|_\infty = 1$ (\cite{gautschi1974norm}), so \Cref{eq:mainequation} is satisfied. 
Let us define $\hat{g}(z):= \sum_{n=0}^{N-1} g_n (z-z_0)^n$ the truncated series expansion and notice that
\begin{equation*}
    \|\tilde{g} - g\|_{L^\infty(K)} \le \|\tilde{g} - \hat{g}\|_{L^\infty(K)} + \|\hat{g} - g\|_{L^\infty(K)},
\end{equation*}
by triangular inequality on every compact subset $K$. The second term tends to zero as $N\rightarrow\infty$ since $g \in H_{z_0}(D)$. 

On the other hand, different considerations need to be made for the first term since the coefficients in the series expansion of $\tilde{g}$ depend on $N$. 

We notice that $b^n_j=b^{n\bmod{N}}_j$ because of the periodicity of $e^{2\pi ij/N}$. Hence,
\begin{equation*}
    \begin{aligned}
        |\tilde{g}(z) - \hat{g}(z)| &= \left|\sum_{n=N}^\infty \sum_{j=1}^{N} \frac{a_j b_j^{n\bmod{N}} \phi^{(n)}(\xi)}{n!} (z-z_0)^n\right| \\
        &\le \sum_{n=N}^\infty  |\phi_n| |z-z_0|^n \left|\sum_{j=1}^{N} a_j b_j^{n\bmod{N}}\right| \\
        & = \sum_{n=N}^\infty |\phi_n| |z-z_0|^n \frac{\left|g_{n\bmod{N}}\right|}{|\phi_{n \bmod{N}}|} \\
        &= \sum_{l=1}^\infty \sum_{m=0}^{N-1} |g_{m}|\frac{|\phi_{lN+m}|}{|\phi_{m}|} |z-z_0|^{lN+m},
    \end{aligned}
\end{equation*}
where in the third row we exploit the information from the linear system $V\mathbf{a}=\mathbf{s}$ and in the last passage a change of variables is applied. 

Let us consider any ball centered in $z_0$ contained in $K$ and denote with $r>0$ its radius, then $|g_m|\le r^{-m} \|g\|_{L^\infty(K)}$ by the Cauchy's estimate (\cite{rudin1987real}, Theorem 10.26). Furthermore,
\begin{equation*}
    \forall R>0,  \hspace{2mm} \exists m_0\in\mathbb{N}_0 : \hspace{3mm} \frac{|\phi_{lN+m}|}{|\phi_m|}=\prod_{j=m}^{lN+m-1}\frac{|\phi_{j+1}|}{|\phi_j|} \le \frac{1}{R^{lN}}, \hspace{5mm} \forall m\ge m_0,
\end{equation*}
since $\phi$ is entire and thus its radius of convergence is infinite. 

We choose $R=2\sup_{z \in K}|z-z_0|^2/r$. Then, for any $N>m_0$, we have
\begin{align*}
    |\tilde{g}(z) - \hat{g}(z)| \le & \|g\|_{L^\infty(K)}\sum_{l=1}^\infty \Big(\sum_{m=0}^{m_0-1}  \frac{|\phi_{lN+m}|}{r^m|\phi_{m}|} |z-z_0|^{lN+m} 
    +  \sum_{m=m_0}^{N-1} \frac{1}{r^{m} R^{lN}} |z-z_0|^{lN+m}\Big) \\ 
    \le &  \|g\|_{L^\infty(K)}\Big(\sum_{l=1}^\infty \sum_{m=0}^{m_0-1}  \frac{\|\phi^{(m)}\|_{L^\infty(B(\xi,R))}}{r^m R^{lN}} |z-z_0|^{lN+m} 
    + \sum_{l=1}^\infty \sum_{m=m_0}^{N-1} \frac{1}{r^{m} R^{lN}} |z-z_0|^{lN+m}\Big) \\ 
    \le & \|g\|_{L^\infty(K)} \max\left\{ \left\{\|\phi^{(m)}\|_{L^\infty(B(\xi,R))}\right\}_{m=0}^{m_0-1}\cup \{1\}\right\}\left(\sum_{m=0}^{N-1}  \frac{|z-z_0|^m }{r^m} \right) \left(\sum_{l=1}^\infty \frac{|z-z_0|^{lN}}{R^{lN}}\right),
\end{align*}
where Cauchy's estimate has been applied on the ball $B(\xi,R)$ centered in $\xi$ with radius $R$ and interchanges of summations are justified by Tonelli's theorem (\cite{rudin1987real}, Theorem 8.8a). 

The first factors do not depend on $N$ and are finite thanks to the properties of $g,\phi$. To conclude, we have
\begin{align*}
\sum_{m=0}^{N-1} \left(\frac{|z-z_0|}{r}\right)^{m}\sum_{l=1}^\infty \left(\frac{|z-z_0|}{R}\right)^{lN} \le N \max\left\{\left(\frac{|z-z_0|^2}{rR}\right)^N,\left(\frac{|z-z_0|}{R}\right)^N\right\}\sum_{l=0}^\infty \left(\frac{|z-z_0|}{R}\right)^{lN}.
\end{align*}
From the definition of $R$, it follows that $\sup_{z \in K}|z-z_0|^2< rR$ and $\sup_{z \in K}|z-z_0|< R$. Hence, the geometric series converges and the whole term tends to zero as $N\rightarrow \infty$ uniformly on $K$.  

We have verified that $\tilde{g}$ satisfies the universal approximation property uniformly on $K$ with respect to the space of $H_{z_0}(D)$ functions. Theorem 4 of \cite{luh1986universal} implies that $H_{z_0}(D)$ contains a set which is dense in $H(D)$ under the same uniform norm. Therefore, the approximation property holds also on $H(D)$.
\end{proof}

\begin{remark}\label{rmk:bounds}
The choices for the network parameters in the proof of \Cref{theo:approximation} satisfy the following bounds:
\begin{align*}
    \max_{j=1,\dots,N}|a_j| &\le \|V^{-1}\|_\infty \|\mathbf{s}\|_\infty = \max_{j=0,\dots,N-1} \left| \frac{g^{(j)}(z_0)}{\phi^{(j)}(\xi)}\right|, \\ 
    \max_{j=1,\dots,N}|b_j| & = \max_{j=1,\dots,N}\left| e^{2\pi ij/N}\right| = 1, \\
    \max_{j=1,\dots,N}|c_j| & = \max_{j=1,\dots,N}\left| \xi-b_jz_0\right| \le |\xi| + |z_0|,
\end{align*}
so $b_j,c_j$ are in general bounded. On the other hand, the growth of $a_j$ strongly depends on the properties of $g$ and $\phi$.
\end{remark}

\subsection{Neural Network architecture}\label{sec:networkarchitecture}
\begin{figure}[ht]
        \centering
        \begin{subfigure}[t]{0.4\textwidth}
            \centering
            \includegraphics[width=\linewidth]{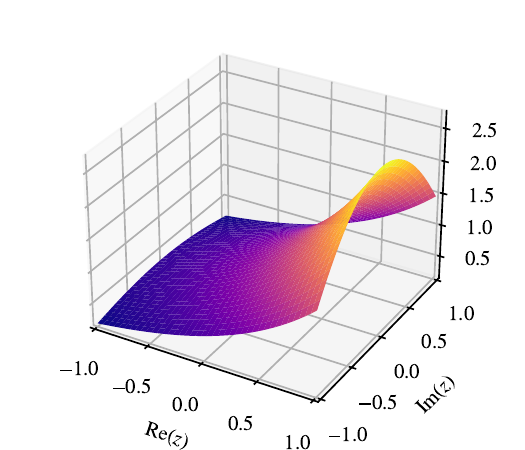}
            \caption{Real part of $e^z$.}
        \end{subfigure}
        \begin{subfigure}[t]{0.4\textwidth}
            \centering
            \includegraphics[width=\linewidth]{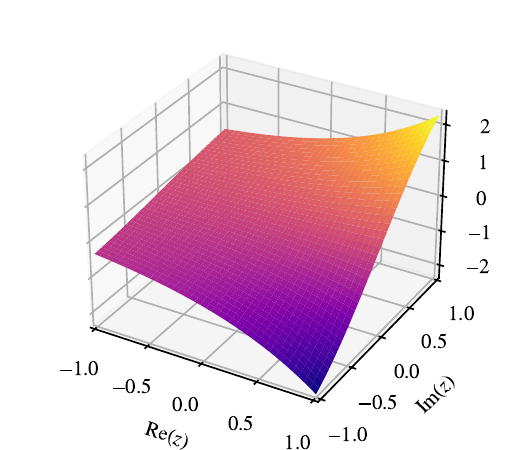}
            \caption{Imaginary part of $e^z$.}
        \end{subfigure}
        \\
        \begin{subfigure}[t]{0.4\textwidth}
            \centering
            \includegraphics[width=\linewidth]{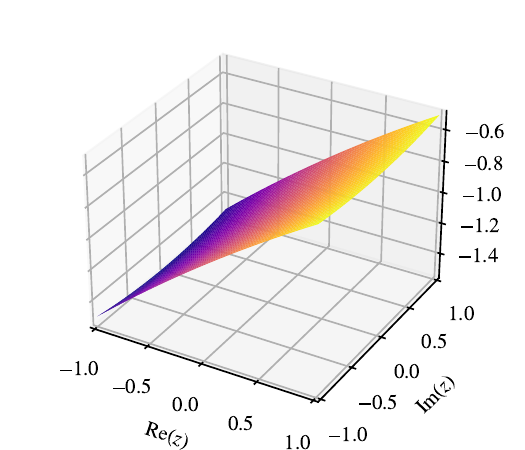}
            \caption{Real part of $-\cos(\sqrt{z})$.}
        \end{subfigure}
        \begin{subfigure}[t]{0.4\textwidth}
            \centering
            \includegraphics[width=\linewidth]{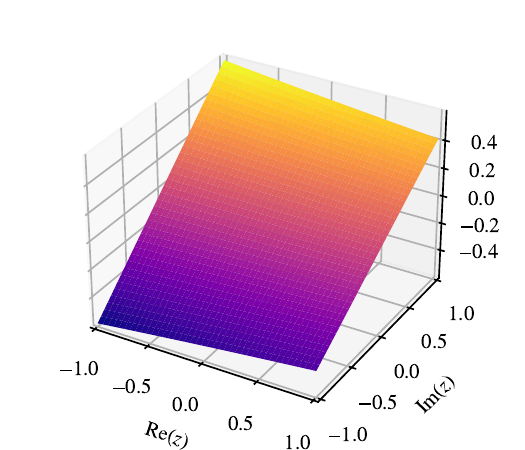}
            \caption{Imaginary part of $-\cos(\sqrt{z})$.}
        \end{subfigure}
        \caption{Surface plots of two possible activation functions, exponential (top) and 1/2-order $-\cos(\sqrt{z})$ (bottom). Both functions are entire and monotonic in the neighborhood of the origin. It can be noted the growth of $-\cos(\sqrt{z})$ is slower compared to that of the exponential.}
    \end{figure}
\subsubsection{Activation functions}    
\Cref{theo:approximation} justifies the use of transcendental entire functions as activations in our network. This choice is not only recommended for our problem but also necessary. Indeed, our aim is to impose the complex representation in \Cref{eq:stressesholomorphic} in a ``hard'' fashion and therefore we enforce the output of the network to be holomorphic. 

The exponential function 
\begin{equation*}
    \phi(z)=e^z,
\end{equation*}
and trigonometric functions $\cos(z),\sin(z)$ represent straightforward choices for the activation functions. Another option is to consider the function $\phi(z) = \cos(\sqrt{z})$, which is monotonic and entire with order $1/2$ despite the square root. To the authors' knowledge, this function has never been proposed in literature. 

After some tests, the exponential activation proved to be the preferred choice when training HNNs in this study. Indeed, trigonometric functions suffer from the presence of local minima close to the origin. Moreover, the lower order of $\phi(z)=\cos(\sqrt{z})$ reduces the risk of exploding gradients but, in contrast, it leads to $\phi^{(j)}(0)=j!/(2j)!$ which worsens the bound in \Cref{rmk:bounds} and entails vanishing gradients. Instead, the exponential function satisfies $\phi^{(j)}(\xi)=1$, $\forall j\in\mathbb{N}_0$ so that the denominator in the bound of \Cref{rmk:bounds} can be removed. Furthermore, we show in \Cref{sec:weightinitialization} that it is possible to avoid exploding gradients by performing a proper weights initialization.

\subsubsection{Overall structure}
As highlighted in the previous sections, many common operations cannot be used in the neural network representation because they are not holomorphic. This imposes some strict limitations, in particular it forces us to adopt a relatively simple architecture, i.e., a multi-layer perceptron with complex exponential activation functions. The input layer has one unit, the $z$ coordinate, and the output is a pair corresponding to the two holomorphic functions. $\varphi,\phi$ in \Cref{eq:stressesholomorphic} are in general independent and, then, it is actually preferable to consider two separate networks with one single output. Automatic differentiation allows to compute the first and second derivative of $\varphi,\psi$ and \Cref{eq:stressesholomorphic} is used to achieve the quantities of interest. The overall structure is depicted in \Cref{fig:diagramNN}.

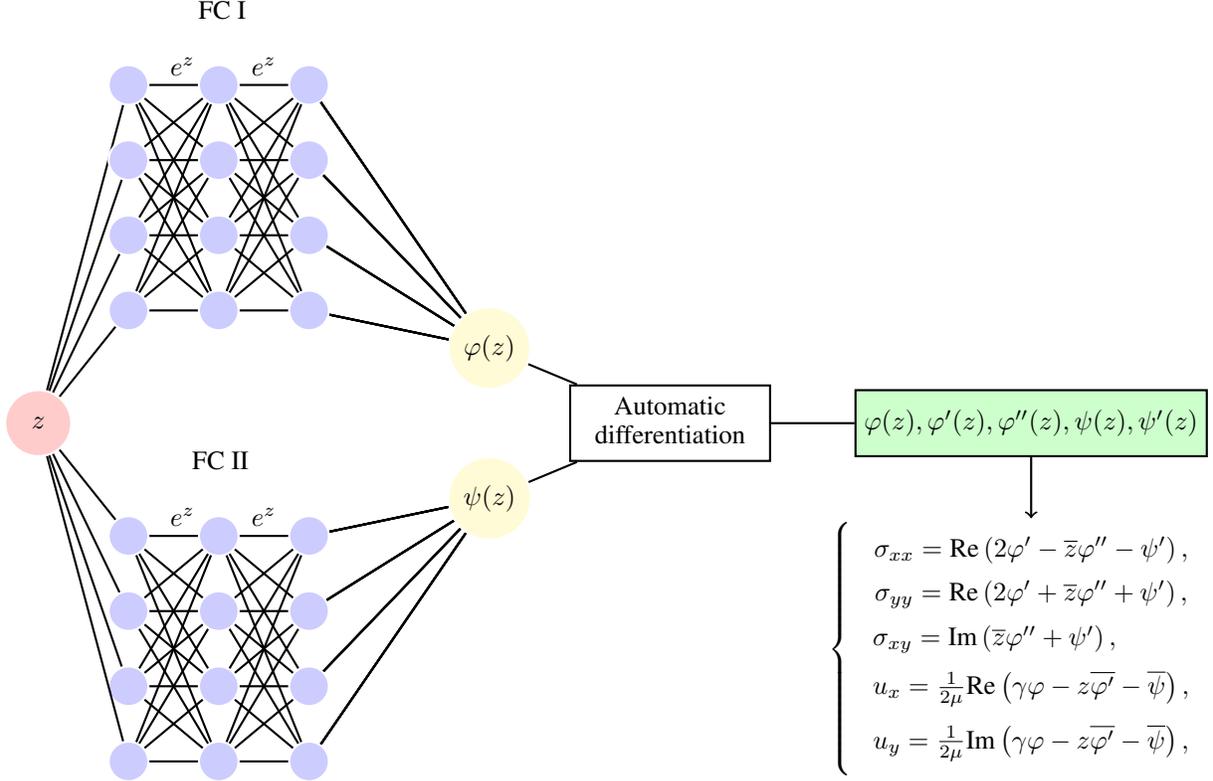
\begin{figure}[ht]
\centering
\begin{tikzpicture}[x=1.2cm,y=1cm]
\node[thick,draw=white,fill=red!20,circle,minimum size=25] (z) at (0,-3) {$z$};
\node[thick,draw=white,fill=yellow!20,circle,minimum size=25] (varphi) at (5,-2) {$\varphi(z)$};
\node[thick,draw=white,fill=yellow!20,circle,minimum size=25] (psi) at (5,-4) {$\psi(z)$};
\node[thick,draw=black,fill=white!20,rectangle,minimum size=25] (diff) at (7,-3) {\begin{tabular}{cc}
    Automatic\\
    differentiation
  \end{tabular}};
\node[thick,draw=black,fill=green!20,rectangle,minimum size=25] (der) at (11,-3) {$\varphi(z),\varphi'(z),\varphi''(z),\psi(z),\psi'(z)$};

\matrix[matrix of math nodes,left delimiter=\lbrace] (mat) at (11,-6)
{
    \sigma_{xx} = \text{Re}\left(2\varphi' - \overline{z}\varphi''-\psi'\right), \\
    \sigma_{yy} = \text{Re}\left(2\varphi' + \overline{z}\varphi''+\psi'\right), \\
    \sigma_{xy} = \text{Im}\left(\overline{z}\varphi''+\psi'\right), \phantom{cccccc}\\
    u_x = \frac{1}{2\mu}\text{Re}\left(\gamma \varphi - z \overline{\varphi'} - \overline{\psi}\right), \\ 
    u_y = \frac{1}{2\mu}\text{Im}\left(\gamma \varphi - z \overline{\varphi'} - \overline{\psi}\right), \\ 
};

\draw[thick] (varphi) -- (diff);
\draw[thick] (psi) -- (diff);
\draw[thick] (diff) -- (der);
\draw[thick,->] (der) -- (mat);

\tikzstyle{innernode}=[thick,draw=white,fill=blue!20,circle,minimum size=15] 
  \readlist\Nnod{4,4,4} 
  \foreachitem \N \in \Nnod{ 
    \foreach \i [evaluate={\x=\Ncnt; \y=\N/2-\i+0.5; \prev=int(\Ncnt-1);}] in {1,...,\N}{
      \node[innernode] (N\Ncnt-\i) at (\x,\y) {};
      
      \ifnum\Ncnt>1 \foreach \j in {1,...,\Nnod[\prev]}{
          \draw[thick] (N\prev-\j) -- (N\Ncnt-\i); 
          \ifnum\Ncnt=\Nnodlen 
           \draw[thick] (varphi) -- (N\Ncnt-\i); 
           \fi
        }
       \else 
          \draw[thick] (z) -- (N\Ncnt-\i); 
       \fi
    }
}
  \foreachitem \N \in \Nnod{ 
    \foreach \i [evaluate={\x=\Ncnt; \y=\N/2-\i+0.5; \prev=int(\Ncnt-1);}] in {1,...,\N}{
      \node[innernode] (P\Ncnt-\i) at (\x,\y-6) {};
      
      \ifnum\Ncnt>1 \foreach \j in {1,...,\Nnod[\prev]}{
          \draw[thick] (P\prev-\j) -- (P\Ncnt-\i); 
          \ifnum\Ncnt=\Nnodlen 
           \draw[thick] (psi) -- (P\Ncnt-\i); 
           \fi
        }
       \else 
          \draw[thick] (z) -- (P\Ncnt-\i); 
       \fi
    }
}

\node[] at (2.04,2.5) {FC I}; 
\node[] at (2.02,-3.5) {FC II};
\node[] at (1.6,1.75) {$e^z$}; 
\node[] at (2.5,1.75) {$e^z$}; 
\node[] at (1.6,-4.25) {$e^z$}; 
\node[] at (2.5,-4.25) {$e^z$}; 

\end{tikzpicture}
\caption{Simplified diagram of the proposed network (vanilla version). The input contains a batch of complex coordinates from points on the boundary of the domain and feeds two independent fully-connected networks. Once the holomorphic functions $\varphi,\psi$ are computed, automatic differentiation and Kolosov-Muskhelishvili formulae allow to obtain stresses and displacements.}
\label{fig:diagramNN}
\end{figure}

As mentioned above, common libraries perform automatic differentiation in terms of the Wirtinger derivative
\begin{equation*}
    \frac{\partial \varphi }{\partial z} := \frac{1}{2}\left(\frac{\partial \varphi}{\partial x} - i \frac{\partial \varphi}{\partial y}\right).
\end{equation*}
However, it is well-known that the Wirtinger derivative coincides with the standard complex derivative when the function is holomorphic. \\ 
The architecture in \Cref{fig:diagramNN} can be improved by omitting some unnecessary complex derivatives. The optimal choice depends on the problem at hand and it is based on the fact that derivatives of holomorphic functions are also holomorphic. Accordingly, we distinguish the following two cases:
\begin{enumerate}
    \item \emph{Standard configuration}: when stresses and displacements have to be computed to solve \Cref{eq:linearelasticity}. The architecture is the same as in \Cref{fig:diagramNN}.
    \item \emph{Stress-only configuration}: when $\Gamma_d=\emptyset$ and only stresses have to be computed. The outputs of the neural networks are $\varphi',\psi'$ by ignoring $\varphi,\psi$ and omitting two derivatives.
\end{enumerate}
Thus, option 2 is preferred to optimize computations but also to establish a more direct relation between weights and loss leading to a speedup in the training process. 

\subsubsection{Definition of loss function}\label{sec:definitionofcostfunction}
We define the loss function in a relatively standard way as the mean squared error (MSE):
\begin{equation}\label{eq:loss}
    \mathcal{L} := \alpha_d\mathcal{L}_d + \alpha_n\mathcal{L}_n,
\end{equation}
where $\alpha_d,\alpha_n\ge0$, $\alpha_d+\alpha_n=1$ are positive parameters defined as 
\begin{equation}\label{eq:parametersloss}
    \alpha_*:=\frac{l(\Gamma_*)}{l(\partial\Omega)},
\end{equation}
denoting with $l(\cdot)$ the length of the curve, and
\begin{equation*}
\begin{aligned}
\mathcal{L}_d&:= \mathbb{E}_{z \sim \mathcal{U}(\Gamma_d)}\left[\|\bm{u}_{NN}(z) - \bm{u}_0(z)\|_2^2\right], \\
\mathcal{L}_n&:= \mathbb{E}_{z \sim \mathcal{U}(\Gamma_n)}\left[\|\bm{\upsigma}_{NN}(z) \cdot \bm{n}_z - \bm{t}_0(z)\|_2^2\right],
\end{aligned}
\end{equation*}
where the $NN$ subscript is used to indicate that the variable is computed through the neural network, $\bm{n}_z$ is the outward normal unit vector to $\partial\Omega_\mathbb{C}$ in $z$, $\mathcal{U}(\cdot)$ denotes the uniform distribution and $\|\cdot\|_2$ is the standard Euclidean norm in $\mathbb{R}^2$. In practice, means are approximated by sampling averages over finite sets of boundary points. 

Differently from more classical PINNs where loss functions are composed by multiple heterogeneous parts, e.g., the differential equation and boundary conditions residuals, a careful loss weighting is here less needed since components contributing to the total error are less numerous and are all represented by residuals of boundary conditions. 

More advanced and state-of-the-art sampling techniques can be adopted to improve the accuracy of the method. However, further investigation in this direction goes beyond the scopes of this research and the standard uniform sampling proves sufficient for the tests in \Cref{sec:benchmarkexamples}. We also notice that both established and recently proposed methods, such as Latin hypercube sampling \cite{Stein1987143}, Sobol sequence \cite{sobol1967distribution} and residual-based adaptive distribution \cite{wu2023comprehensive}, are typically designed for the sampling from the interior of a two or three-dimensional domain rather than from its boundary.

\subsubsection{Weight initialization}\label{sec:weightinitialization}
Current network implementations often rely on the Xavier \cite{glorot2010understanding} or He \cite{he2015delving} weight initializations to mitigate effects of activation saturation during training, which is caused by vanishing or exploding gradients and can result in slow learning or even learning failure. The former strategy of initialization applies to a wide class of symmetric activation functions (e.g., the sigmoid) while the latter is more specific for the rectifier linear unit (ReLU). However, none of these is suitable to our strategy since the complex exponential is not symmetric, nor similar to the ReLU; hence, a different initialization must be adopted. To illustrate the idea, we start by following the argument of \cite{he2015delving}, using a similar nomenclature and making the same assumptions. This choice will also help to highlight the differences with the classical He initialization. \\
Each fully connected layer in a composite network performs the operation 
\begin{align*}
\bm{y}_l &= W_l \bm{x}_{l-1} + \bm{b}_l, \\
\bm{x}_{l} &= \phi(\bm{y}_{l}), 
\end{align*}
where $l=1,2,\dots,L$ denotes the layer number, $N_l\in \mathbb{N}$ is the number of units at each layer, $W_l \in \mathbb{C}^{N_l,N_{l-1}}$ is the weight matrix and $\bm{b}_l \in \mathbb{C}^{N_l}$ is the bias vector initialized to zero. We assume that for every $l$ all elements in $\{\text{Re}(W_l),\text{Im}(W_l)\}$ are i.i.d.\ with zero mean, elements in $\bm{x}_l$ are i.i.d.\ and $\bm{x}_l$ is independent from $W_l$. Thus, we define $x_l,y_l,w_l$ the random variables associated to each element in $\bm{x}_l, \bm{y}_l, W_l$, respectively. The \citet{he2015delving} initialization is constructed imposing that variances are constant across the layers, namely, $\mathbb{V}[y_l]=\mathbb{V}[y_{l-1}]$. In the complex-valued context, we extend this condition to
\begin{equation}\label{eq:hyp_var}
    \mathbb{V}[\text{Re}(y_l)]=\mathbb{V}[\text{Im}(y_l)]=\mathbb{V}[\text{Re}(y_{l-1})]=\mathbb{V}[\text{Im}(y_{l-1})].
\end{equation}
Let us call $p_l:=\mathbb{V}[\text{Re}(w_l)]= \mathbb{V}[\text{Im}(w_l)]$, then
\begin{equation*}
\begin{aligned}
    \mathbb{V}[\text{Re}(y_l)] &= N_{l-1} \mathbb{V}[\text{Re}(w_l)\text{Re}(x_{l-1}) - \text{Im}(w_l)\text{Im}(x_{l-1})] \\
    &= N_{l-1}\left( \mathbb{V}[\text{Re}(w_l)\text{Re}(x_{l-1})] + \mathbb{V}[\text{Im}(w_l)\text{Im}(x_{l-1})]\right) \\
    &= N_{l-1}\left(\mathbb{E}[\text{Re}^2(w_l)\text{Re}^2(x_{l-1})]-\mathbb{E}^2[\text{Re}(w_l)\text{Re}(x_{l-1})] + \mathbb{E}[\text{Im}^2(w_l)\text{Im}^2(x_{l-1})]-\mathbb{E}^2[\text{Im}(w_l)\text{Im}(x_{l-1})] \right) \\
    &= N_{l-1}\left(\mathbb{E}[\text{Re}^2(w_l)]\mathbb{E}[\text{Re}^2(x_{l-1})] + \mathbb{E}[\text{Im}^2(w_l)]\mathbb{E}[\text{Im}^2(x_{l-1})] \right) \\    
    &= N_{l-1}\left( \mathbb{V}[\text{Re}(w_l)]\mathbb{E}[\text{Re}^2(x_{l-1})] + \mathbb{V}[\text{Im}(w_l)]\mathbb{E}[\text{Im}^2(x_{l-1})]\right).
\end{aligned}
\end{equation*}
In the first passage we use the assumption of i.i.d.\ of weights and activations. Then, we employ the formula of the variance of the sum knowing that the covariance is null due to the weights independence and zero mean. These assumptions are also considered later when the expected values are split and some terms are cancelled out. Similar calculations can be repeated for $\mathbb{V}[\text{Im}(y_l)]$ and one achieves
\begin{equation}\label{eq:varianceHe}
    \mathbb{V}[\text{Re}(y_l)] = \mathbb{V}[\text{Im}(y_l)] = N_{l-1}p_l\mathbb{E}[|x_{l-1}|^2],
\end{equation}
automatically satisfying part of \Cref{eq:hyp_var}. We notice that \Cref{eq:varianceHe} is the complex-valued extension of the same equation in \cite{he2015delving}. Thus, \Cref{eq:hyp_var} is fully satisfied if and only if
\begin{equation}\label{eq:var_forward}
    p_l = \frac{\mathbb{V}[y_l]}{2N_{l-1}\mathbb{E}[|\phi(y_{l-1})|^2]}.
\end{equation} 
Inspired by \cite{glorot2010understanding,he2015delving}, we consider an additional class of restraints to the weight initialization based on the control of the gradients in the network backpropagation. Namely, we impose
\begin{equation}\label{eq:hyp_grad_var}
    \mathbb{V}\left[\frac{\partial^p \mathcal{F}}{\partial y_{l-1}^p}\right] = \mathbb{V}\left[\frac{\partial^r \mathcal{F}}{\partial y_l^r}\right],
\end{equation}
where $\mathcal{F}$ is any regular function of the output of the network and $p,r\in\{1,2,3\}$. Typically, \Cref{eq:hyp_grad_var} is imposed only for $p=r=1$ but we include derivatives up to third order since the loss function depends in general on $\varphi',\psi', \varphi''$ from \Cref{eq:stressesholomorphic}. This ensures that the holomorphic functions, their second derivatives and the gradient of the loss function do not explode nor vanish.

In the following, we consider the previous assumptions of i.i.d.\ plus some additional hypotheses of independence already used in \cite{he2015delving}. So, by applying the chain rule of the complex derivative, we have
\begin{equation*}
\begin{aligned}
    \frac{\partial \mathcal{F}}{\partial \bm{y}_{l-1,i}} &= \sum_{j=1}^{N_l}\frac{\partial \mathcal{F}}{\partial \bm{y}_{l,j}} \frac{\partial \bm{y}_{l,j}}{\partial \bm{y}_{l-1,i}}, 
\end{aligned}
\end{equation*}
where $i=1,\dots,N_{l-1}$ denotes one unit of the layer. Hence,
\begin{equation*}
   \mathbb{V}\left[\frac{\partial \mathcal{F}}{\partial y_{l-1}}\right] = N_l \mathbb{V}\left[\frac{\partial \mathcal{F}}{\partial y_l}\right] (2p_l)\mathbb{E}[|\phi'(y_{l-1})|^2] .
\end{equation*}
Therefore, \Cref{eq:hyp_grad_var} is satisfied for the first derivatives ($p=r=1$) if
\begin{equation}\label{eq:var_backward}
    p_l = \frac{\beta}{2 N_{l}\mathbb{E}[|\phi'(y_{l-1})|^2]},
\end{equation}
with $\beta =\beta_1 = 1$. Second-order chain rule gives
\begin{equation*}
    \frac{\partial^2 \mathcal{F}}{\partial \bm{y}_{l-1,i}^2} = \sum_{j,k=1}^{N_l}\frac{\partial^2 \mathcal{F}}{\partial \bm{y}_{l,j}\partial \bm{y}_{l,k}} \frac{\partial \bm{y}_{l,k}}{\partial \bm{y}_{l-1,i}}\frac{\partial \bm{y}_{l,j}}{\partial \bm{y}_{l-1,i}} + \sum_{j=1}^{N_l}\frac{\partial \mathcal{F}}{\partial \bm{y}_{l,j}} \frac{\partial^2 \bm{y}_{l,j}}{\partial \bm{y}_{l-1,i}^2}, 
\end{equation*}
which entails
\begin{equation}\label{eq:secondderivatives}
      \mathbb{V}\left[\frac{\partial^2 \mathcal{F}}{\partial y_{l-1}^2}\right] = N^2_l \mathbb{V}\left[\frac{\partial^2 \mathcal{F}}{\partial y_l^2}\right] (2p_l)^2\mathbb{E}^2[|\phi'(y_{l-1})|^2] +  N_l \mathbb{V}\left[\frac{\partial \mathcal{F}}{\partial y_l}\right] (2p_l)\mathbb{E}[|\phi''(y_{l-1})|^2].
\end{equation}
We first notice that supplying \Cref{eq:var_backward} with $\beta=\beta_1$ one achieves the condition $\mathbb{E}[|\phi''(y_{l-1})|^2]=0$. Thus, \Cref{eq:hyp_grad_var} cannot hold simultaneously for both first and second derivatives unless $\phi''(z)=0$ a.e., which is impossible for a non-polynomial entire activation function.

Therefore, we can only analyse the condition in \Cref{eq:hyp_grad_var} separately for each order.
Since $\phi(z)=e^z$, \Cref{eq:secondderivatives} is satisfied if $p_l$ is defined as in \Cref{eq:var_backward} with $\beta=\beta_2$, where
\begin{equation*}
    \beta_2^2 + \beta_2 -1 = 0 \Rightarrow \beta_2 = \frac{\sqrt{5}-1}{2} \approx 0.62. 
\end{equation*}
Similarly, one can proceed with the third derivatives and conclude that \Cref{eq:hyp_grad_var} holds for $p=3, r=1,2,3$ if $p_l$ is defined as in \Cref{eq:var_backward} with $\beta=\beta_3$, where
\begin{equation*}
    \beta_3^3 + 3\beta_3^2 + \beta_3 -1 =0 \Rightarrow \beta_3 = \sqrt{2}-1 \approx 0.41.
\end{equation*}
We can draw the following conclusions: standard weight initialization for ANNs (e.g., \cite{glorot2010understanding,he2015delving}) relies on the stability of the first-order derivatives in the backward pass. However, the use of PINNs should require stability of all derivatives until order $n_{PDE}+1$, where $n_{PDE}\in\mathbb{N}$ is the order of the PDE. For general activation functions, the additional conditions are not automatically satisfied. Instead, these are incompatible and the stability of the higher order derivatives typically requires more strict conditions that might entail vanishing lower order derivatives. Conversely, stability of lower order derivatives can lead to exploding higher order derivatives. The case of $\phi(z)=e^z$ allows to greatly simplify the above expressions and obtain some explicit parameters. Furthermore, any positive choice of $\beta$ allows to accord to the condition on the forward pass in \Cref{eq:var_forward} by simply assigning $\mathbb{V}[y_l]=\beta$ and replacing $N_l$ to  $N_{l-1}$, the latter operation being justified in \cite{he2015delving}. These arguments lead to the choice of 
\begin{equation*}
    p_l = \frac{\beta}{2N_{l-1}\mathbb{E}[e^{2\text{Re}(y_{l-1})}]} = \frac{\beta}{2N_{l-1}\mathbb{E}[|x_{l-1}|^2]},
\end{equation*}
where $\beta\in[\beta_2,\beta_1]$ in the stress-only configuration and $\beta \in[\beta_3,\beta_1]$ in the standard configuration. In particular, the choice of $\beta$ depends on the problem at hand since $\varphi,\varphi',\varphi'',\psi,\psi'$ contribute to the loss function in different proportions depending on the applied boundary conditions. 

Differently from \cite{he2015delving}, there is not an explicit value for the variance of the weights since the denominator depends in general on the distribution of $y_{l-1}$. Hence, a first naive approach is to provide a plausible estimate assuming the distribution to be Gaussian. In such case, we have $\mathbb{E}[e^{2\text{Re}(y_{l-1})}]=e^\beta$ which implies that the variance of the weights in the standard configuration is between $\approx 2.7$ and $3.7$ times smaller than in the classical complex-valued Xavier initialization \cite{trabelsi2017deep}. This assumption is motivated by the central limit theorem which holds for $N_l\gg 1$ and $l>1$. On the other hand, inputs are sampled from the boundary of the domain and therefore the distribution of $y_1$ is typically far from being Gaussian. 

We find that weight initialization is instead more reliable if one can obtain a good estimate of the distribution from a numerous sample of boundary coordinates during the pre-processing phase. The cost of this operation is typically negligible since it is performed only once and a sample with few thousands points usually provides a sufficient statistics. Despite this, it can be used in combination with the previous method to improve the efficiency. Specifically, the large set of inputs is used for the initialization of the first $M_e>1$ layers until large values of $N_l$ justify the use of the central limit theorem for the remaining ones, as illustrated in \Cref{alg:1}.

\begin{algorithm}
\caption{PIHNN weight initialization with $\phi(z)=e^z$}\label{alg:1}
\begin{algorithmic}
\Require $\beta\in [\beta_3,\beta_1]$ \Comment{$[\beta_2,\beta_1]$ in the stress-only configuration}
\Require $L\in \mathbb{N}$ \Comment Number of layers minus 1
\Require $\{N_l\}_{l=0}^{L} \subset \mathbb{N}$ \Comment Number of units, included $N_0=N_L=1$
\Require $\bm{x}_0 \subset \mathbb{C}$ \Comment Large sample of inputs, e.g., $\# \bm{x}_0=$ 10 times the training size
\Require $M_e \in \{2,\dots,L+1\}$ \Comment Layer where assumption of normal distribution takes place
\While{$l=1,\dots,M_e-1$}
\State $m_l \gets \frac{1}{{\#\bm{x}_{l-1}}}\sum_{i=1}^{\#\bm{x}_{l-1}}|\bm{x}_{l-1,i}|^2$ \Comment{Sample average}
\State $\text{Re}(W_l),\text{Im}(W_l) \sim \mathcal{N}\left(0, \frac{\beta}{2N_{l-1} m_l}\right)$
\State $\bm{y}_l \gets W_l \bm{x}_{l-1}$
\State $\bm{x}_l \gets e^{\bm{y}_l}$ 
\EndWhile
\While{$l=M_e,\dots,L$}
\State $\text{Re}(W_l),\text{Im}(W_l) \sim \mathcal{N}\left(0, \frac{\beta}{2N_{l-1}e^\beta}\right)$
\EndWhile
\end{algorithmic}
\end{algorithm}

\section{Benchmark examples}\label{sec:benchmarkexamples}
This section is devoted to some benchmark tests. Codes have been implemented with \texttt{PyTorch} \texttt{2.2.2} \cite{paszke2019pytorch} which already provides complex automatic differentiation. On the other hand, some self-made complex modules have been added in order to fully construct the holomorphic neural network. 

Training is run on a single CPU \texttt{i7-1365U} \texttt{1.80GHz} since, as it will be shown later, the great performance of the method allows short training times even in this simple setting. Furthermore, the elastic parameters are set to $\lambda=\mu=1 $ MPa and plane strain is assumed. If not explicitly mentioned, weights are by default initialized as in \Cref{alg:1} where the size of $\bm{x}_0$ is 10 times the number of training points and $M_e=3$. Furthermore, $\beta=0.5$ as motivated in \Cref{sec:weight_test}.

\subsection{Simply-connected domain with regular BCs}\label{sec:Simply_conn}

\subsubsection{Circular ring under uniform pressure}\label{sec:ring}
A first test is run on a circular ring with inner radius $r=0.5$ m and outer radius $R=2$ m (\Cref{fig:geometry_test1}). A uniform negative pressure $ p = - 1 $ MPa is applied on the outer boundary of the ring, whereas the inner boundary is stress free. \\ 
The analytical solution to this problem is known and corresponds to (\cite{muskhelishvili1977}, Section 59a):
\begin{equation}\label{eq:ring_exact_sol_stress}
    \begin{cases}
        \sigma_r(\rho,\theta) &= -p \frac{R^2}{R^2-r^2} \left(1 - \frac{r^2}{\rho^2}\right), \\ 
        \sigma_\theta(\rho,\theta) &= -p \frac{R^2}{R^2-r^2} \left(1 + \frac{r^2}{\rho^2}\right),
    \end{cases}
\end{equation}
where $\sigma_r,\sigma_\theta$ denote the normal stresses in the radial coordinates $r,\theta$ while the shear stress is null due to the radial symmetry. Its complex representation is
\begin{equation}\label{eq:ring_exact_sol}
    \begin{cases}
        \varphi'(z) &= - \frac{p}{2} \frac{R^2}{R^2-r^2}, \\ 
        \psi'(z) &= - p \frac{r^2R^2}{R^2-r^2} \frac{1}{z^2}.
    \end{cases}
\end{equation}

\begin{figure}
\centering
    \begin{subfigure}[t]{0.45\textwidth}
    \centering
    \begin{tikzpicture}
    \draw [fill=gray!20] (0,0) circle (60pt);
    \draw [fill=white] (0,0) circle (15pt);
    \draw [<->] (0,1pt) -- (0,15pt);
    \draw [<->] (1pt,1pt) -- (42pt,42pt);
    \node[] at (-3pt,7pt) {$r$};
    \node[] at (20pt,28pt) {$R$};
    \draw [dotted] (-15pt,0) -- (-60pt,0);
    \draw [dotted] (0,15pt) -- (0,60pt);
    \node[] at (-72pt,5pt) {$\bm{t}_0$};
    
    \foreach \i [evaluate={\j={cos(\i*360/20)}; \k={sin(\i*360/20)}}] in {0,...,20}
    {\draw [<-] (80pt*\j,80pt*\k) -- (62pt*\j,62pt*\k);}
    
    \end{tikzpicture}
    \caption{Geometry.}
    \label{fig:geometry_test1}
    \end{subfigure}
        \begin{subfigure}[t]{0.45\textwidth}
        \centering
        \includegraphics[width=\linewidth]{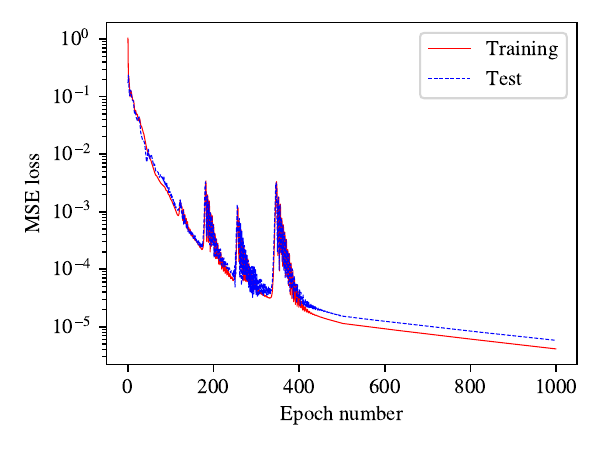}
        \caption{Learning curve. }
        \label{fig:test1_loss}
        \end{subfigure} \\

        \begin{subfigure}[t]{0.24\textwidth}
            \centering
            \includegraphics[trim={0 0.5cm 0 0},clip,width=\linewidth]{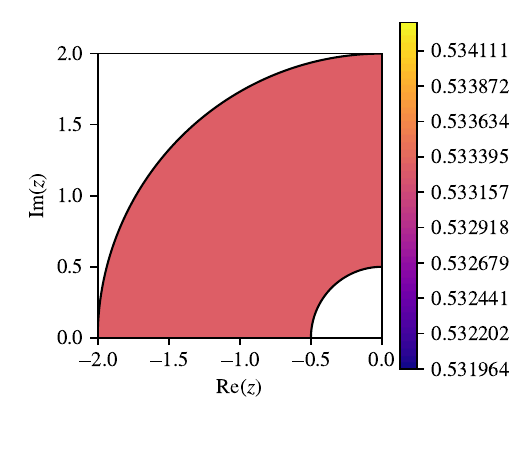}
            \caption{Re$(\varphi')$.}
            \label{fig:test1_phiR_exact}
        \end{subfigure}
        \begin{subfigure}[t]{0.24\textwidth}
            \centering
            \includegraphics[trim={0 0.5cm 0 0},clip,width=\linewidth]{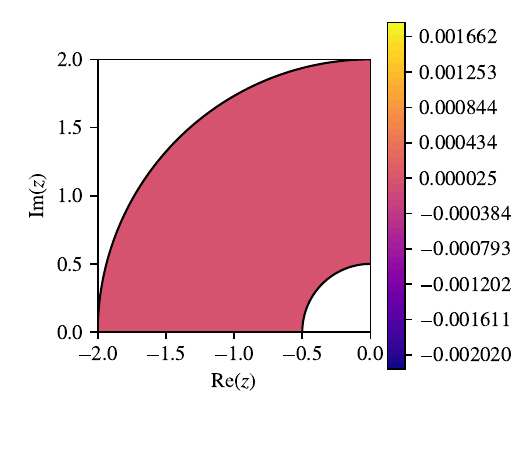}
            \caption{Im$(\varphi')$.}
            \label{fig:test1_phiI_exact}
        \end{subfigure}
        \begin{subfigure}[t]{0.24\textwidth}
            \centering
            \includegraphics[trim={0 0.5cm 0 0},clip,width=\linewidth]{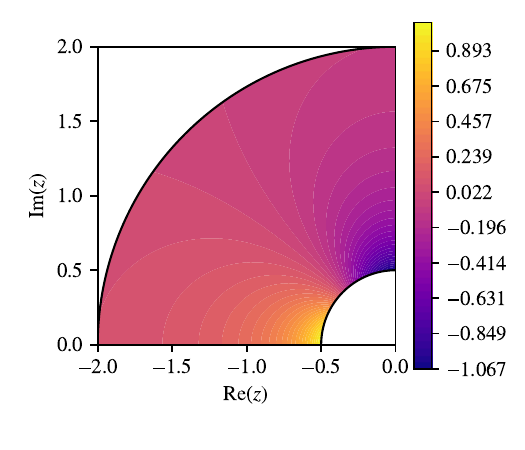}
            \caption{Re$(\psi')$.}
            \label{fig:test1_psiR_exact}
        \end{subfigure}
        \begin{subfigure}[t]{0.24\textwidth}
            \centering
            \includegraphics[trim={0 0.5cm 0 0},clip,width=\linewidth]{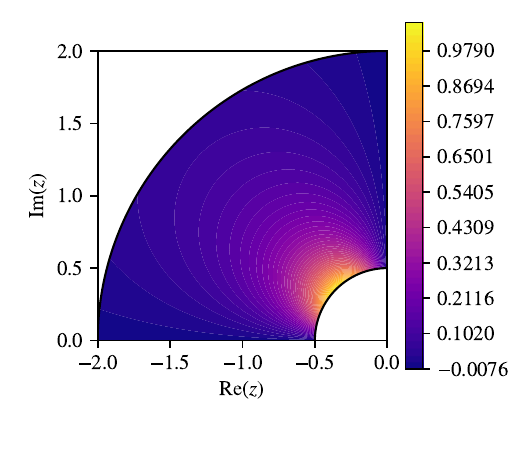}
            \caption{Im$(\psi')$.}
            \label{fig:test1_psiI_exact}
        \end{subfigure}
        \\ 
            \begin{subfigure}[t]{0.24\textwidth}
            \centering
            \includegraphics[trim={0 0.5cm 0 0},clip,width=\linewidth]{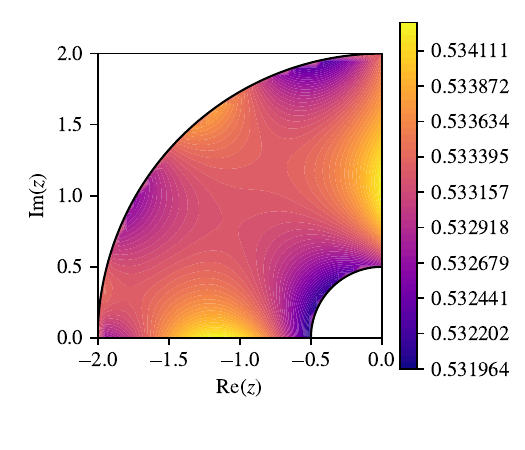}
            \caption{Re$(\varphi'_{NN})$.}
            \label{fig:test1_phiR}
        \end{subfigure}
        \begin{subfigure}[t]{0.24\textwidth}
            \centering
            \includegraphics[trim={0 0.5cm 0 0},clip,width=\linewidth]{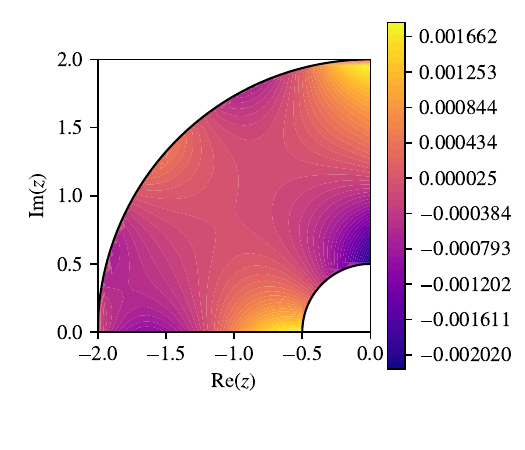}
            \caption{Im$(\varphi'_{NN})$.}
            \label{fig:test1_phiI}
        \end{subfigure}
        \begin{subfigure}[t]{0.24\textwidth}
            \centering
            \includegraphics[trim={0 0.5cm 0 0},clip,width=\linewidth]{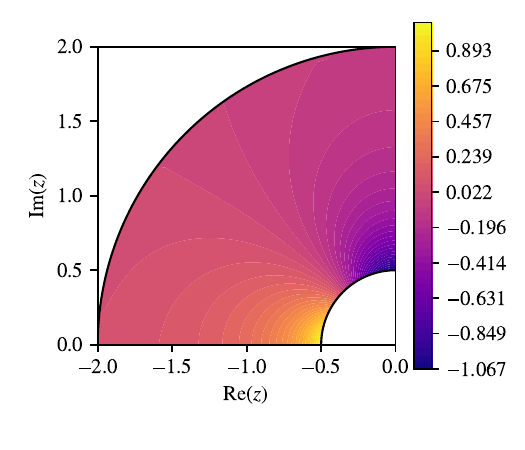}
            \caption{Re$(\psi'_{NN})$.}
            \label{fig:test1_psiR}
        \end{subfigure}
        \begin{subfigure}[t]{0.24\textwidth}
            \centering
            \includegraphics[trim={0 0.5cm 0 0},clip,width=\linewidth]{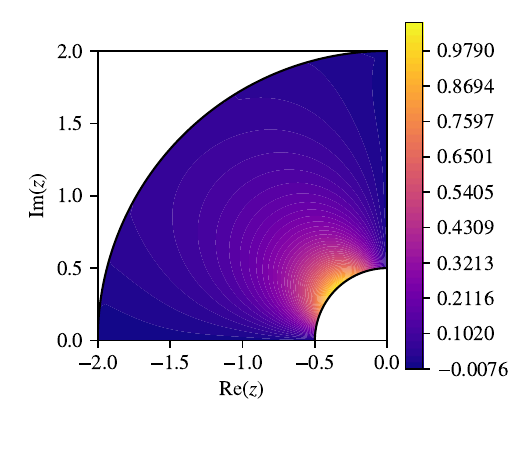}
            \caption{Im$(\psi'_{NN})$.}
            \label{fig:test1_psiI}
        \end{subfigure}
        \\
        \begin{subfigure}[t]{0.24\textwidth}
            \centering
            \includegraphics[trim={0 0.5cm 0 0},clip,width=\linewidth]{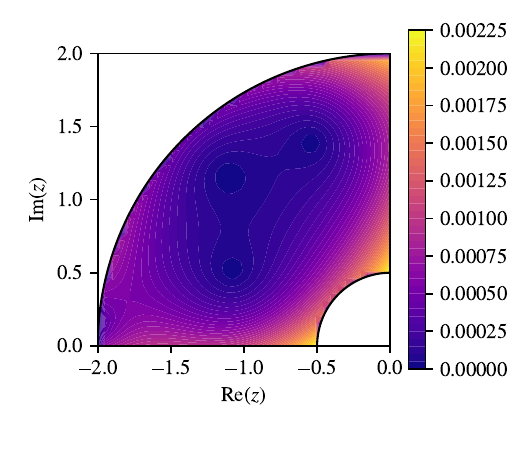}
            \caption{$|\phi'_{NN}-\phi'|$.}
            \label{fig:test1_phi_error}
        \end{subfigure}
        \begin{subfigure}[t]{0.24\textwidth}
            \centering
            \includegraphics[trim={0 0.5cm 0 0},clip,width=\linewidth]{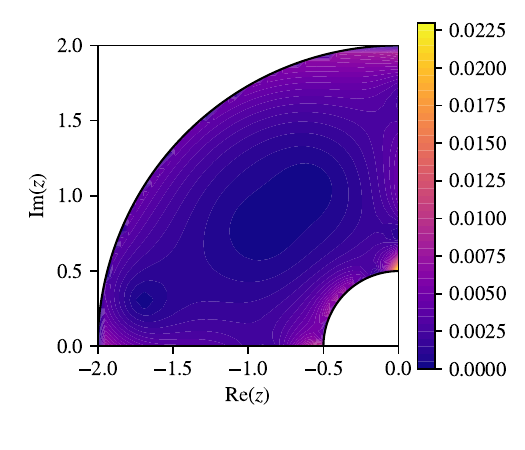}
            \caption{$|\psi'_{NN}-\psi'|$.}
            \label{fig:test1_psi_error}
        \end{subfigure}

        \caption{Ring subjected to uniform negative pressure on the outer boundary. \subref{fig:geometry_test1}: Geometry and boundary conditions. \subref{fig:test1_loss}: training loss (red) and test loss (blue) during the training of the HNN. \subref{fig:test1_phiR_exact},\subref{fig:test1_phiI_exact},\subref{fig:test1_psiR_exact},\subref{fig:test1_phiI_exact},\subref{fig:test1_psiI_exact}: real and imaginary parts of the analytical functions from \Cref{eq:ring_exact_sol}. \subref{fig:test1_phiR},\subref{fig:test1_phiI},\subref{fig:test1_psiR},\subref{fig:test1_psiI}: corresponding solutions obtained from the training of the network.\subref{fig:test1_phi_error},\subref{fig:test1_psi_error}: errors as difference of learned and exact solutions. To highlight the comparison, learned and exact solutions have the same color ranges.}
        \label{fig:results_test1}

    \end{figure}
Due to the symmetry of the problem, we restrict our model to the upper-left quadrant of the ring and we apply the following symmetry boundary conditions along the section lines (marked by dotted segments in \Cref{fig:geometry_test1}): 
\begin{equation}\label{eq:symBCs}
    (\bm{\upsigma} \cdot \bm{n}) \times \bm{n}  = \bm{u} \cdot \bm{n} = 0.
\end{equation}
Therefore, we consider an additional portion of boundary $\Gamma_s\subset \partial \Omega$ such that $\overline{\partial \Omega} = \overline{\Gamma_d}\cup \overline{\Gamma_n} \cup \overline{\Gamma_s}$ and we add the term $\alpha_s\mathcal{L}_s$ to the loss in \Cref{eq:loss}, where $\alpha_s$ is defined as in \Cref{eq:parametersloss} and 
\begin{equation*}
    \mathcal{L}^2_s:= \mathbb{E}_{z \sim \mathcal{U}(\Gamma_s)}\left[|\bm{\upsigma}_{NN}(z) \cdot \bm{n}_z) \times \bm{n}_z|^2  +| \bm{u}_{NN}(z) \cdot \bm{n}_z|^2\right].
\end{equation*}
Before analysing the performance of the proposed method on the resolution of stresses and displacements, we assess the convergence to the holomorphic functions in \Cref{eq:ring_exact_sol}. We notice that the correct reconstruction of $\varphi',\psi'$ is sufficient to guarantee the right approximation of all the other variables of interest, since the original problem in \Cref{fig:geometry_test1} is formulated only in terms of stresses. Furthermore, the symmetry condition on $\Gamma_s$ ensures no rigid displacement, guaranteeing that $\varphi',\psi'$ are unique (cfr. \cite{muskhelishvili1977}). Conversely, $\varphi,\psi$ are in general unique up to additive constants, making them less favourable for an approximation analysis. 

The two fully-connected networks are composed by only 2 hidden layers with 10 units each. Training is performed by the Adam optimizer \cite{kingma2014adam} with initial learning rate $0.03$ and 1000 epochs. We take into account $200$ uniformly sampled training points and $20$ test points. The loss decay is shown in \Cref{fig:test1_loss} which confirms that the network correctly learns from data without overfitting. 

The learned functions $\phi'_{NN}$ and $\psi_{NN}'$ are compared against the corresponding analytical solutions in \Cref{fig:results_test1}. Contours show that the error is less than 10 times smaller than the magnitude of the solution and, as expected, is higher in the area close to the hole. Training over 1000 epochs takes only 15 seconds. This is in line with the small number of network weights and training points. Indeed, each fully-connected network has 141 complex weights which sum up to a total of only 564 real network parameters. Also the number of training points (200) is very low with respect to classical PINNs since the training set occupies the boundary rather than the whole domain $\Omega_\mathbb{C}$. This indicates that the smart representation through \Cref{eq:stressesholomorphic} is effective in reducing computational time and memory usage. 

The excellent speed of learning is also due to the fact that the exact solution in \Cref{eq:ring_exact_sol} can be analytically continued on $\mathbb{C}\backslash\{0\}$ which implies, by \Cref{theo:approximation}, uniform convergence also on the boundary of the domain. On the other hand, the singularity at $z=0$ prevents the exact solution to be written as a finite combination of exponential functions and making the problem trivial.

\subsubsection{Plate with circular hole under uniaxial tension}\label{sec:squareplatewithhole}
We consider now a square plate with side length $L=2.5$ m and circular hole with radius $r=1$ m subjected to uniform uniaxial tension of magnitude 1 MPa. Compared to the test considered in the previous section, the geometry is not axisymmetric anymore and the exact analytical solution is not available. Therefore, the solution computed by a finite element method (FEM) solver with second order elements of size $\approx 0.04$ m is taken as the reference solution. We consider again the upper-left quadrant and the symmetry condition in \Cref{eq:symBCs} is applied on the dotted segments in \Cref{fig:geometry_test2}. Network, optimizer and number of training points are defined as in \Cref{sec:ring} except the training is performed over 2000 epochs. Sampled training points for this test can be viewed in \Cref{fig:sampledpoints}.

The learning curve is shown in \Cref{fig:test2_loss} where, except intermittent spikes, the decay is regular and coherent between training and test points. Hence, despite the small number of training points, the network is confirmed not to overfit the data. 

In the same figure, we also provide the learning curves when $\beta=\beta_1,\beta_3$. According to \Cref{sec:weightinitialization}, the optimal value for the parameter $\beta$ lies in $[\beta_3,\beta_1]$. By choosing a value at the edge of such interval, instead of the recommended choice $\beta=0.5$, learning is as expected slowed down and there is more chance of encountering local minima. On the other hand, one can often achieve satisfying results even when $\beta$ is any value in this range, especially when the network is small as in this test. Conversely, by mistakenly opting for values outside this range or employing other methods (such as the complex-valued Xavier or Glorot initialization \cite{trabelsi2017deep}), exploding gradients and overflows easily occur. 

Results in terms of stresses are depicted in \Cref{fig:results_test2} where, once again, errors are lower than 10 times the magnitude of the solution. Furthermore, memory usage is comparable to the previous test because the same architecture is used and training takes only 30 seconds.

\begin{figure}[ht]
    \centering
    \includegraphics[width=0.4\textwidth]{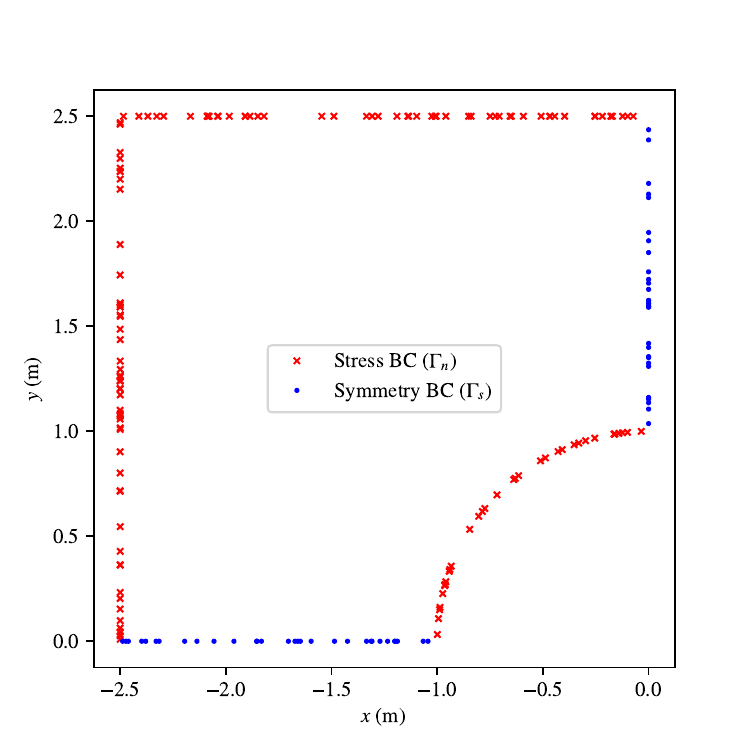}
    \caption{Sampled training points for the test in \Cref{fig:results_test2}. As explained in \Cref{sec:definitionofcostfunction}, the loss function is computed from a finite set of uniformly sampled coordinates. Specifically, the number of sampled points on each edge is proportional to its length.}
    \label{fig:sampledpoints}
\end{figure}

\begin{figure}
\centering
    \begin{subfigure}[t]{0.45\textwidth}
    \centering
    \begin{tikzpicture}
    \draw [fill=gray!20] (-60pt,-60pt) rectangle (60pt,60pt);
    \draw [fill=white] (0,0) circle (24pt);
    \draw [<->] (0,1pt) -- (0,23pt);
    \draw [<->] (-59pt,-65pt) -- (59pt,-65pt);
    \node[] at (-3pt,12pt) {$r$};
    \node[] at (0pt,-70pt) {$L$};
    \draw [dotted] (-24pt,0) -- (-60pt,0);
    \draw [dotted] (0,24pt) -- (0,60pt);
    \draw [<-] (-80pt,0) -- (-62pt,0);
    \draw [<-] (-80pt,20pt) -- (-62pt,20pt);
    \draw [<-] (-80pt,40pt) -- (-62pt,40pt);
    \draw [<-] (-80pt,-20pt) -- (-62pt,-20pt);
    \draw [<-] (-80pt,-40pt) -- (-62pt,-40pt);
    \phantom{\draw [->] (0,-80pt) -- (0,-62pt);}
    \draw [<-] (80pt,0) -- (62pt,0);
    \draw [<-] (80pt,20pt) -- (62pt,20pt);
    \draw [<-] (80pt,40pt) -- (62pt,40pt);
    \draw [<-] (80pt,-20pt) -- (62pt,-20pt);
    \draw [<-] (80pt,-40pt) -- (62pt,-40pt);
    \phantom{\draw [->] (0,80pt) -- (0,62pt);}
    \node[] at (-74pt,5pt) {$\bm{t}_0$};
    \node[] at (74pt,5pt) {$\bm{t}_0$};
    \phantom{\node[] at (0pt,-83pt) {$\bm{t}_0$};}
    \end{tikzpicture}
    \caption{Geometry.}
    \label{fig:geometry_test2}
    \end{subfigure}
            \begin{subfigure}[t]{0.45\textwidth}
        \centering
        \includegraphics[width=\linewidth]{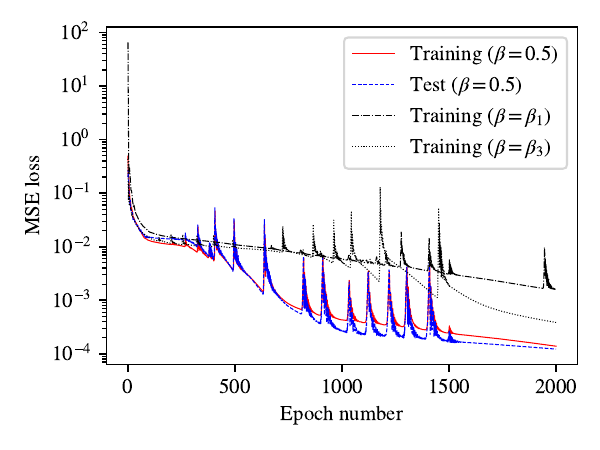}
        \caption{Learning curve. }
        \label{fig:test2_loss}
        \end{subfigure} \\
    
        \begin{subfigure}[t]{0.3\textwidth}
            \centering
            \includegraphics[trim={0 0.5cm 0 0},clip,width=\linewidth]{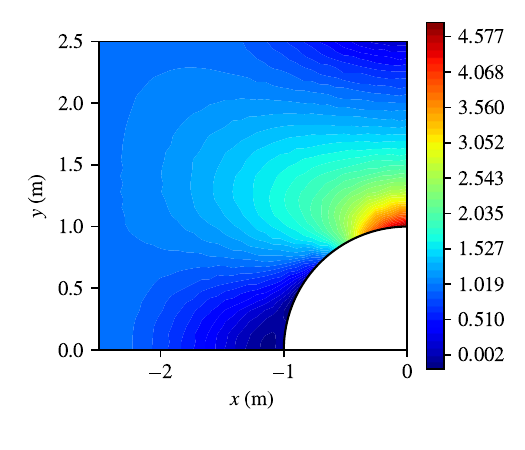}
            \caption{$\sigma^{FE}_{xx}$.}
            \label{fig:test2_xx_exact}
        \end{subfigure}
        \begin{subfigure}[t]{0.3\textwidth}
            \centering
            \includegraphics[trim={0 0.5cm 0 0},clip,width=\linewidth]{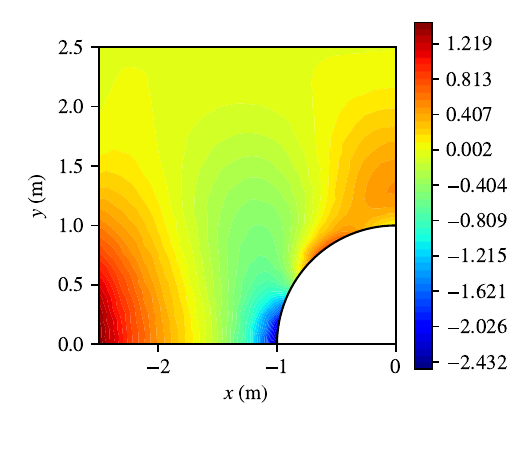}
            \caption{$\sigma^{FE}_{yy}$.}
            \label{fig:test2_yy_exact}
        \end{subfigure}
        \begin{subfigure}[t]{0.3\textwidth}
            \centering
            \includegraphics[trim={0 0.5cm 0 0},clip,width=\linewidth]{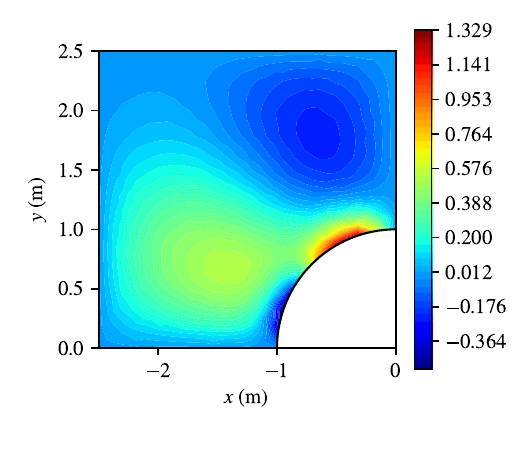}
            \caption{$\sigma^{FE}_{xy}$.}
            \label{fig:test2_xy_exact}
        \end{subfigure}
        \\
                \begin{subfigure}[t]{0.3\textwidth}
            \centering
            \includegraphics[trim={0 0.5cm 0 0},clip,width=\linewidth]{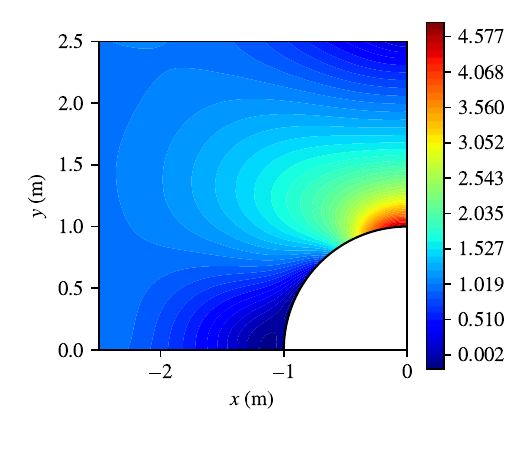}
            \caption{$\sigma^{NN}_{xx}$.}
            \label{fig:test2_xx}
        \end{subfigure}
        \begin{subfigure}[t]{0.3\textwidth}
            \centering
            \includegraphics[trim={0 0.5cm 0 0},clip,width=\linewidth]{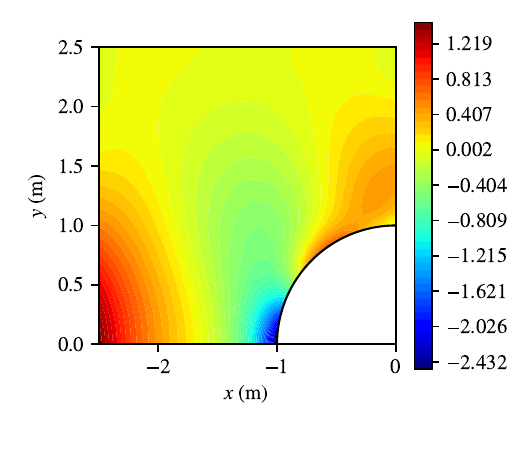}
            \caption{$\sigma^{NN}_{yy}$.}
            \label{fig:test2_yy}
        \end{subfigure}
        \begin{subfigure}[t]{0.3\textwidth}
            \centering
            \includegraphics[trim={0 0.5cm 0 0},clip,width=\linewidth]{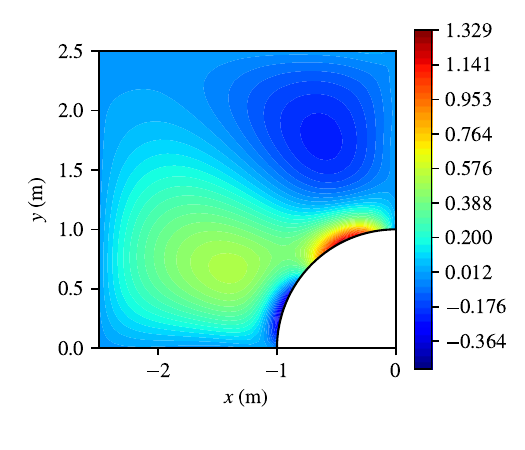}
            \caption{$\sigma^{NN}_{xy}$.}
            \label{fig:test2_xy}
        \end{subfigure}
        \\
        \begin{subfigure}[t]{0.3\textwidth}
            \centering
            \includegraphics[trim={0 0.5cm 0 0},clip,width=\linewidth]{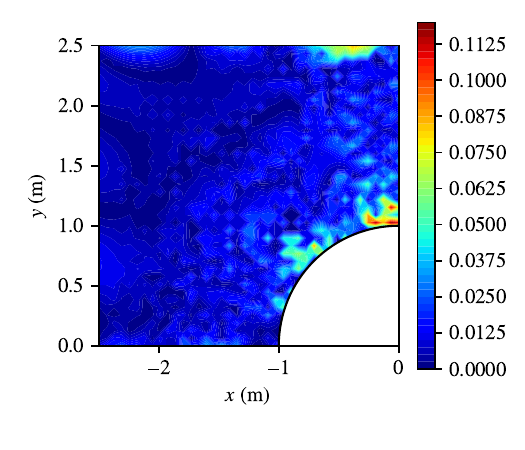}
            \caption{$\left|\sigma_{xx}^{NN}-\sigma_{xx}^{FE}\right|$.}
            \label{fig:test2_xx_error}
        \end{subfigure}
        \begin{subfigure}[t]{0.3\textwidth}
            \centering
            \includegraphics[trim={0 0.5cm 0 0},clip,width=\linewidth]{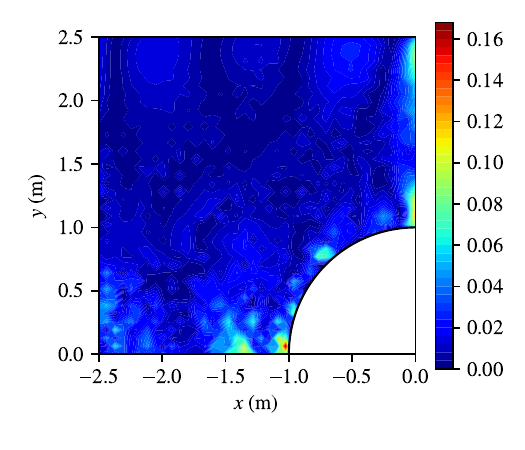}
            \caption{$\left|\sigma_{yy}^{NN}-\sigma_{yy}^{FE}\right|$.}
            \label{fig:test2_yy_error}
        \end{subfigure}
        \begin{subfigure}[t]{0.3\textwidth}
            \centering
            \includegraphics[trim={0 0.5cm 0 0},clip,width=\linewidth]{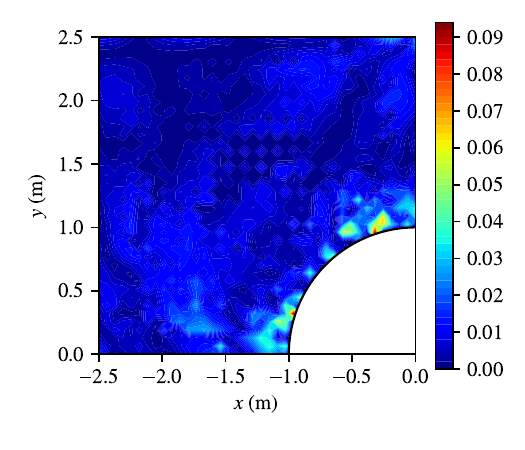}
            \caption{$\left|\sigma_{xy}^{NN}-\sigma_{xy}^{FE}\right|$.}
            \label{fig:test2_xy_error}
        \end{subfigure}
        \caption{Plate with circular hole under uniaxial tension. \subref{fig:geometry_test2}: schematic of the geometry and BCs. \subref{fig:test2_loss}: training loss (red) and test loss (blue) during the training of the HNN. The same test is run with $\beta=\beta_1$ (black) and $\beta=\beta_3$ (gray). \subref{fig:test2_xx_exact},\subref{fig:test2_yy_exact},\subref{fig:test2_xy_exact}: numerical stresses obtained from the finite element method. \subref{fig:test2_xx},\subref{fig:test2_xy},\subref{fig:test2_xy}: stresses obtained from the training of the network. \subref{fig:test2_xx_error},\subref{fig:test2_yy_error},\subref{fig:test2_xy_error}: errors as difference of learned and numerical solutions. To highlight the comparison, learned and numerical stresses have the same color ranges. Stress units are MPa.}
        \label{fig:results_test2}
    \end{figure}

\begin{figure}
\centering
\begin{subfigure}[t]{0.45\textwidth}
\centering
\begin{tikzpicture}
\draw [fill=gray!20] (-60pt,-60pt) rectangle (60pt,60pt);
\draw [<->] (-59pt,0pt) -- (59pt,0pt);
\node[] at (0pt,5pt) {$L$};
\draw [<-] (-20pt,65pt) -- (20pt,65pt);
\phantom{\node[] at (0pt,-83pt) {$\bm{t}_0$};}
\node[] at (0pt,70pt) {$\bm{t}_0$};

\foreach \i in {0,...,20}
{
    \draw [] (6pt*\i-60pt,-60pt) -- (6pt*\i-66pt,-65pt);
}
\end{tikzpicture}
\caption{Geometry.}
\label{fig:geometry_test3}
\end{subfigure}
\begin{subfigure}[t]{0.45\textwidth}
\centering
\includegraphics[width=\linewidth]{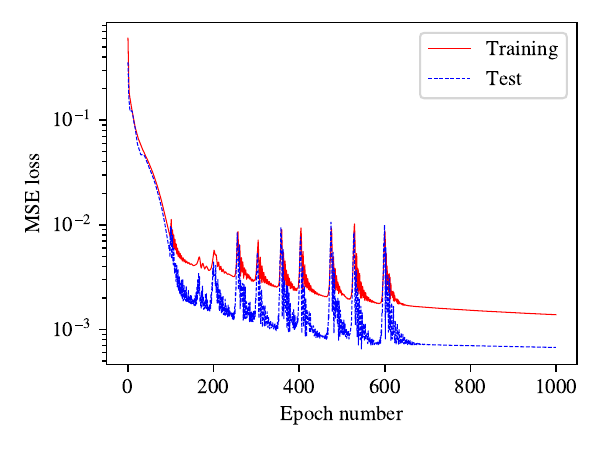}
\caption{Learning curve.}
\label{fig:test3_loss}
\end{subfigure}
        \begin{subfigure}[t]{0.3\textwidth}
            \centering
            \includegraphics[trim={0 0.5cm 0 0},clip,width=\linewidth]{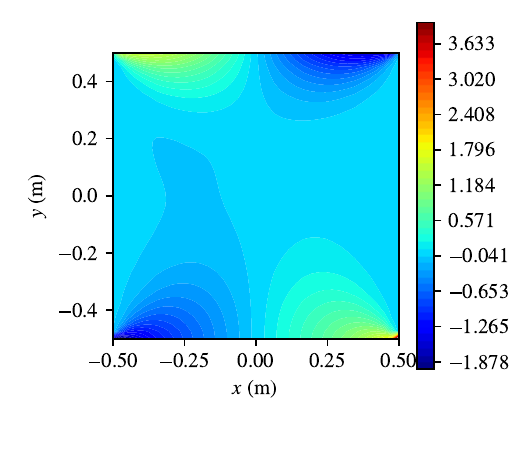}
            \caption{$\sigma^{FE}_{xx}$.}
            \label{fig:test3_xx_exact}
        \end{subfigure}
        \begin{subfigure}[t]{0.3\textwidth}
            \centering
            \includegraphics[trim={0 0.5cm 0 0},clip,width=\linewidth]{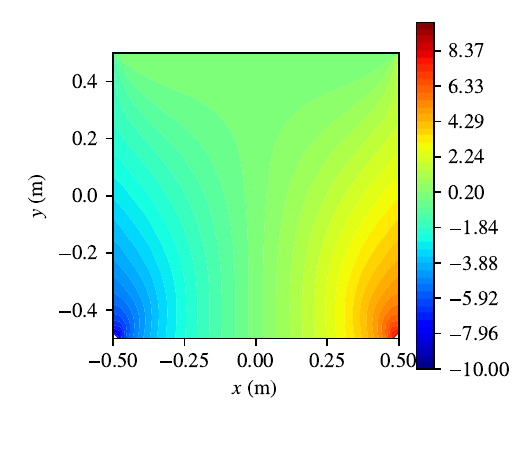}
            \caption{$\sigma^{FE}_{yy}$.}
            \label{fig:test3_yy_exact}
        \end{subfigure}
        \begin{subfigure}[t]{0.3\textwidth}
            \centering
            \includegraphics[trim={0 0.5cm 0 0},clip,width=\linewidth]{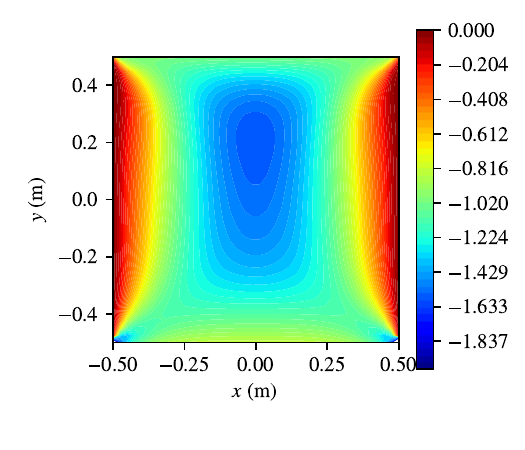}
            \caption{$\sigma^{FE}_{xy}$.}
            \label{fig:test3_xy_exact}
        \end{subfigure}
        \\
            \begin{subfigure}[t]{0.3\textwidth}
            \centering
            \includegraphics[trim={0 0.5cm 0 0},clip,width=\linewidth]{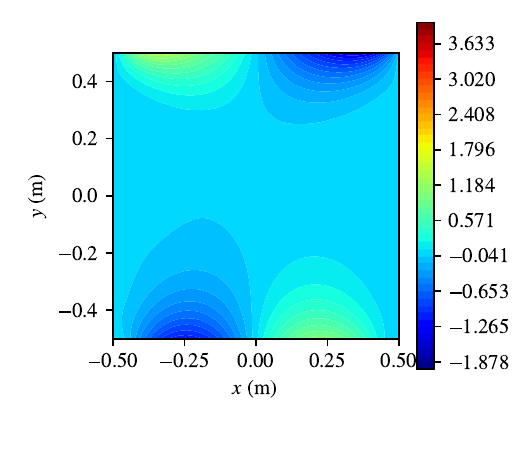}
            \caption{$\sigma^{NN}_{xx}$ (PIHNN).}
            \label{fig:test3_xx}
        \end{subfigure}
        \begin{subfigure}[t]{0.3\textwidth}
            \centering
            \includegraphics[trim={0 0.5cm 0 0},clip,width=\linewidth]{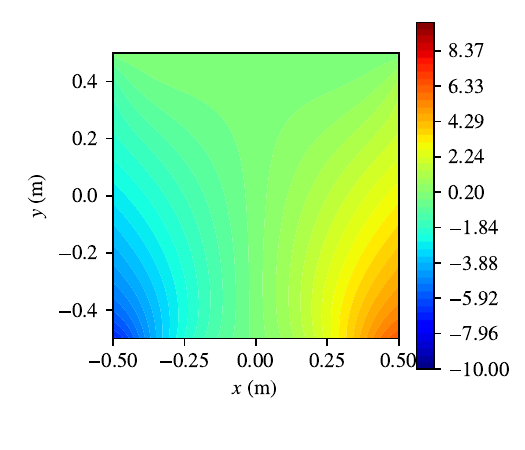}
            \caption{$\sigma^{NN}_{yy}$ (PIHNN).}
            \label{fig:test3_yy}
        \end{subfigure}
        \begin{subfigure}[t]{0.3\textwidth}
            \centering
            \includegraphics[trim={0 0.5cm 0 0},clip,width=\linewidth]{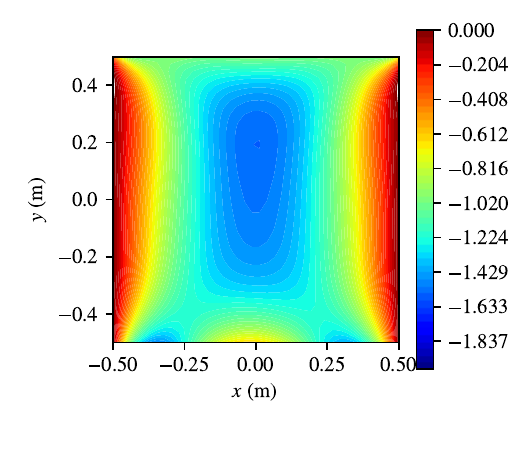}
            \caption{$\sigma^{NN}_{xy}$ (PIHNN).}
            \label{fig:test3_xy}
        \end{subfigure}
        \\
        \begin{subfigure}[t]{0.3\textwidth}
            \centering
            \includegraphics[trim={0 0.5cm 0 0},clip,width=\linewidth]{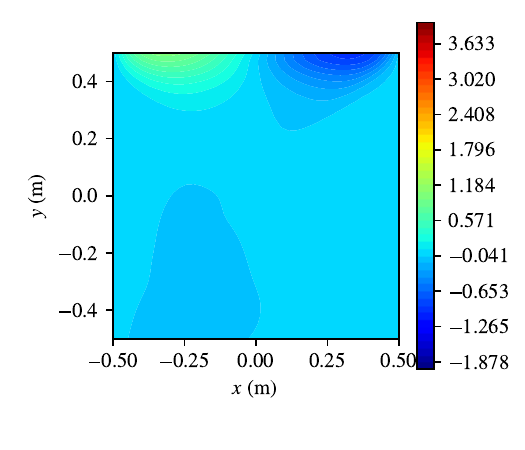}
            \caption{$\sigma_{xx}^{NN}$ (PINN-1000).}
            \label{fig:test3_xx_sciann}
        \end{subfigure}
        \begin{subfigure}[t]{0.3\textwidth}
            \centering
            \includegraphics[trim={0 0.5cm 0 0},clip,width=\linewidth]{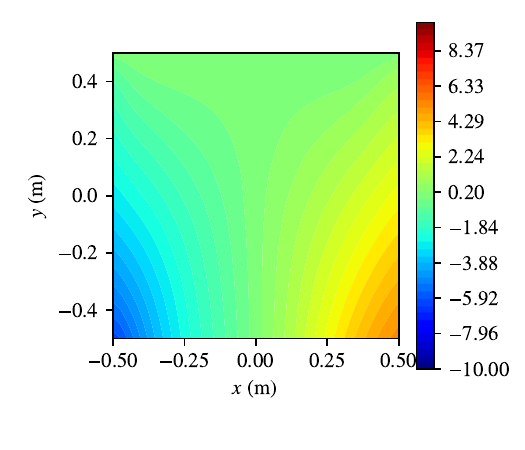}
            \caption{$\sigma_{yy}^{NN}$ (PINN-1000).}
            \label{fig:test3_yy_sciann}
        \end{subfigure}
        \begin{subfigure}[t]{0.3\textwidth}
            \centering
            \includegraphics[trim={0 0.5cm 0 0},clip,width=\linewidth]{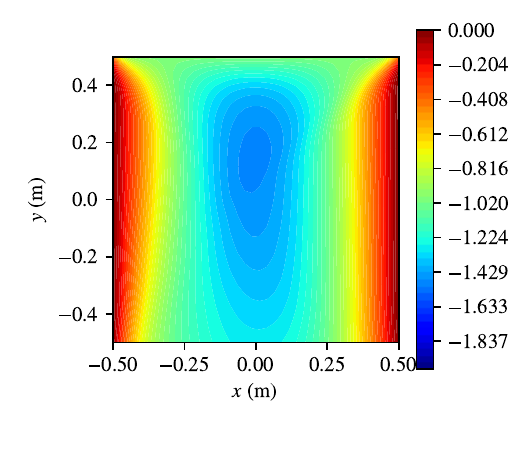}
            \caption{$\sigma_{xy}^{NN}$ (PINN-1000).}
            \label{fig:test3_xy_sciann}
        \end{subfigure}
        \caption{Clamped plate under shear stress. \subref{fig:geometry_test3}: schematic of the geometry and BCs. \subref{fig:test3_loss}: training (red) and test (blue) loss during training. \subref{fig:test3_xx},\subref{fig:test3_yy},\subref{fig:test3_xy}: numerical stresses obtained from the finite element method. \subref{fig:test3_xx_exact},\subref{fig:test3_yy_exact},\subref{fig:test3_xy_exact}: stresses obtained from the training of the holomorphic neural network. \subref{fig:test3_xx_sciann},\subref{fig:test3_yy_sciann},\subref{fig:test3_xy_sciann}: stresses obtained from the training over 1000 epochs of the real-valued PINN. To highlight the comparison, learned and numerical stresses have the same color ranges. Stress units are MPa.}
        \label{fig:results_test3}
    \end{figure}

\subsection{Simply-connected domain with irregular BCs}\label{sec:Simply_conn2}

\subsubsection{Clamped plate under shear}\label{sec:shearstress}
In this third example we aim to assess the quality of the approximation when geometry and boundary conditions lead to irregularities on the boundary. Hence, we consider a simple square of edge $L=1$ m where a shear traction of 1 MPa is applied on the upper edge, zero displacement is prescribed on the bottom edge, and the remaining edges are stress-free (cf. \Cref{fig:geometry_test3}). In contrast to the boundary conditions adopted in \Cref{sec:squareplatewithhole}, which match at the corners leading to a globally regular solution, the present choice gives rise to a stress discontinuity at the top-left and top-right corners of the square, as well as to a stress singularity of order $\approx 0.3$ at the bottom-left and bottom-right corners \cite{england1971stress}. That is, the stress field is expected to vary as $ r^{-0.3} $ near the bottom corners, where $ r $ is the corner distance.

The non-regularity of the solution at the corners motivates the use of a slightly larger network, with 4 inner layers and 100 units each. Starting learning rate is $1\cdot10^{-4}$ and training takes approximately 70 seconds to perform 1000 epochs with 200 training points and 20 test points. The reference solution inside the domain is computed by a finite element solver with polynomials of first degree and element size $2.5\cdot 10^{-3}$ m and the comparison is shown in \Cref{fig:test3_xx_exact,fig:test3_yy_exact,fig:test3_xy_exact,fig:test3_xx,fig:test3_yy,fig:test3_xy}. 

Learned and numerical solutions are very similar, which confirms that the method is effective for displacement boundary conditions as well.
On the other hand, a closer look reveals that the HNN method suffers in approximating the solution near the corners, especially near the bottom ones, where the stress singularities occur. Thus, recalling \Cref{theo:approximation}, the method uniformly converges only in the interior of the domain and the overall quality of the approximation heavily depends on the type and order of the singularity. 

To investigate the quality of the approximation on the boundary, the learned variables evaluated along the edges are compared with the corresponding "true" values from the boundary conditions in \Cref{fig:test3_border}. As expected, larger errors are located in the proximity of the corners (i.e., $x,y\rightarrow\pm0.5$). Also, it can be seen that the learned solution is smooth, as a consequence of the properties of the holomorphic neural networks. 
\begin{figure}[ht]
\centering
\includegraphics[width=0.6\linewidth]{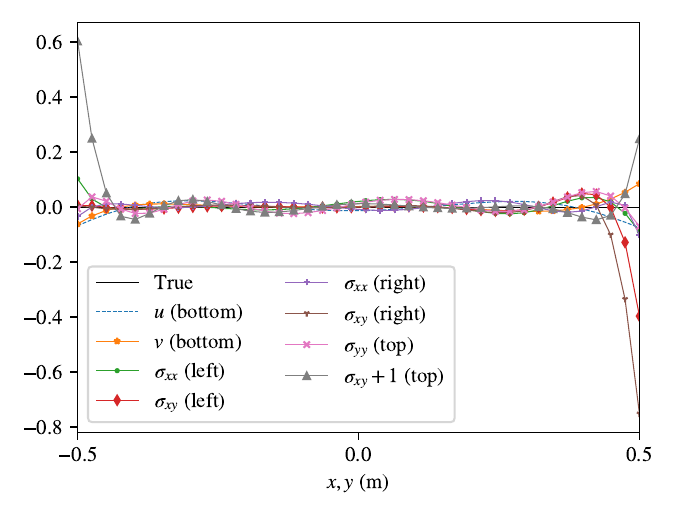}
\caption{Learned variables from \Cref{sec:shearstress} evaluated on the boundary. Each of the 8 curves should be uniformly zero based on the prescribed boundary conditions. As expected, coordinates closer to corners are associated to larger errors. Furthermore, all curves are smooth thanks to the properties of the holomorphic neural network. Stress units are MPa.}
\label{fig:test3_border}
\end{figure}

\subsubsection{Comparison with standard PINN approach}\label{sec:comp_PINN}
Despite its apparent simplicity, the irregularity of the solution on the boundary makes the test considered in the previous section nontrivial for learning methods. Hence, we select it to perform a comparison against standard real-valued PINNs in order to assess the benefits of the proposed method. In particular, we adopt the library \texttt{SciANN 0.7.0.1} \cite{haghighat2021sciann}, which is often a benchmark in the field of physics-informed machine learning and includes some tests from linear elasticity \cite{haghighat2021aphysics} that can be easily re-adapted to the problem in \Cref{fig:geometry_test3}. We employ a network composed by 3 hidden layers with 200 real units each in order to have a fair comparison of accuracy given the same power of representation and memory usage of the HNN. Furthermore, this choice is not very different from the recommended/default option of the library. We also set the batch size to 200 instead of 100 to resemble the HNN training on 200 points. On the other hand, the real-valued network requires a much higher number of total training points (by default, $10\,000$) since this also includes the interior of the domain. Other parameters are set by default (e.g., $\tanh$ activation function, Adam optimizer, initial learning rate 0.001). We train the real-valued network for 100 and 1000 epochs. The first choice (PINN-100) leads to a training time comparable to the test on the PIHNN ($\approx 70$ seconds) while the second choice (PINN-1000, $\approx 700 $ seconds) is aimed at showing the accuracy of the real-valued network after a heavier training. To compare the methods, we compute the errors between the learned stresses and the reference solution from FEM. In particular, we consider the mean-squared error rather than the supremum-norm since, as already stated, convergence is not uniform for this test. Then, we take into account a grid of $40\times 40$ points $\{z_j\}_{j=1}^{160}$ and we estimate the $L^2(\Omega)$ error as 
\begin{equation*}
    \|\sigma^{NN}_*-\sigma^{FE}_*\|_{L^2(\Omega)} \approx \sqrt{\frac{1}{160}\sum_{j=1}^{160} \left(\sigma^{NN}_*(z_j)-\sigma^{FE}_*(z_j)\right)^2},
\end{equation*}
where $\sigma_*$ indicates either $\sigma_{xx},\sigma_{yy},\sigma_{xy}$.
Finally, errors are displayed in \Cref{tab:errors} where, as expected, the holomorphic approach leads to a consistent improvement with respect to the real-valued PINNs. In particular, there is a consistent difference between the HNN and the PINN-100, regardless of the fact that they share the same amount of weights and training time. Furthermore, the HNN is significantly more accurate than the PINN-1000 also, despite the training of the latter takes approximately 10 times longer. Visual evidence of this is provided in \Cref{fig:test3_xx_sciann,fig:test3_yy_sciann,fig:test3_xy_sciann}, where it can be noticed that, unlike the HNN, the PINN-1000 is unable to reconstruct the solution in the lower region of the domain.

\begin{table}[ht]
    \centering
    \begin{tabular}{c|c|c|c}
         &  PIHNN & PINN-100 & PINN-1000 \\ \hline
        $\sigma_{xx}$ error & 0.732 & 1.192 & 0.929 \\ \hline 
        $\sigma_{yy}$ error & 1.294 & 2.115 & 1.487 \\ \hline 
        $\sigma_{xy}$ error & 0.534 & 0.815 & 0.642 \\ \hline 
        Training time & 73s & 71s & 724s \\ \hline
    \end{tabular}
    \caption{Estimated $L^2$ errors of the PIHNN (left), the real-valued PINN after 100 epochs (middle) and the real-valued PINN after 1000 epochs (right) for the test in \Cref{sec:shearstress}.}
    \label{tab:errors}
\end{table}

\subsubsection{Clamped rail under compression}\label{sec:railsection}

\begin{figure}[ht]
\centering
\begin{tikzpicture}[scale=0.6]

\newcommand\cosf{0.70710678}
\coordinate (c1) at (-3,4);
\coordinate (z1) at ([shift=({3*\cosf,3*\cosf})]c1);
\coordinate (z2) at ([shift=({-2*\cosf,2*\cosf})]z1);
\coordinate (c2) at ([shift=({3*\cosf,3*\cosf})]z2);
\coordinate (z3) at ([shift=({-3,0})]c2);
\coordinate (c3) at ([shift=({8,0})]z3);
\coordinate (c4) at ([shift=({6*\cosf,-6*\cosf})]c3);
\coordinate (z4) at ([shift=({-3,0})]c4);
\coordinate (c5) at ([shift=({0.5,-2*\cosf+1})]z4);
\coordinate (z5) at ([shift=({0,-0.5})]c5);

\draw [] (10,4.5) to[dim above=$4.5$] (10,0); 
\draw [] (0,0) to[dim above=$10$] (10,0); 
\draw [] (0,0) to[dim above=$4$] (0,4);
\draw (0,4) arc (0:45:3); 
\draw [] (z1) to[dim below=$2$] (z2);
\draw (z2) arc (225:180:3); 
\draw [] (z3) to[dim above=$2$] ([shift=({0,2})]z3);
\draw [] ([shift=({11,2})]z3) to[dim below=$11$] ([shift=({0,2})]z3); 
\draw [] ([shift=({11,2})]z3) to[dim above=$2$] ([shift=({11,0})]z3);
\draw ([shift=({11,0})]z3) arc (0:-45:3);
\draw (z4) arc (180:135:3);
\draw ([shift=({0,-2*\cosf+1})]z4) to[dim above=$0.41$] (z4); 
\draw ([shift=({0,-2*\cosf+1})]z4) arc (180:270:0.5);
\draw (10,4.5) to[dim below=$3.43$] (z5);  

\draw[dotted] (c1) to[dim above=$3$] (0,4); \draw[dotted] (c1) -- (z1);
\draw[dotted] (c2) to[dim below=$3$] (z2); \draw[dotted] (c2) -- (z3);
\draw[dotted] (c3) to[dim above=$3$] ([shift=({3*\cosf,-3*\cosf})]c3); \draw[dotted] (c3) -- ([shift=({11,0})]z3);
\draw[dotted] ([shift=({3*\cosf,-3*\cosf})]c3) to[dim above=$3$] (c4); \draw[dotted] (c4) -- (z4);
\draw[dotted] (c5) -- ([shift=({-0.5,0})]c5); \draw[dotted] (c5) to[dim above=$0.5$] ([shift=({0,-0.5})]c5);

\draw [->] (2.5,13) -- (2.5,12);
\draw [->] (4.1,13) -- (4.1,12);
\draw [->] (5.7,13) -- (5.7,12);
\draw [->] (0.9,13) -- (0.9,12);
\draw [->] (-0.7,13) -- (-0.7,12);
\node[] at (3,12.5) {$\bm{t}_0$};
\foreach \i in {0,...,20}
{
    \draw [] (0.5*\i,0) -- (0.5*\i-0.5,-0.5);
}

\end{tikzpicture}
\caption{Geometry and BCs of the test in \Cref{sec:railsection}. A uniform compressive stress of 1 MPa is applied on the upper edge whereas zero displacement is imposed on the lower edge. All arc angles are $\pi/4$ except $\pi/2$ for the arc with radius 0.5. Dimensions are in meters.}
\label{fig:geometry_test4}
\end{figure}
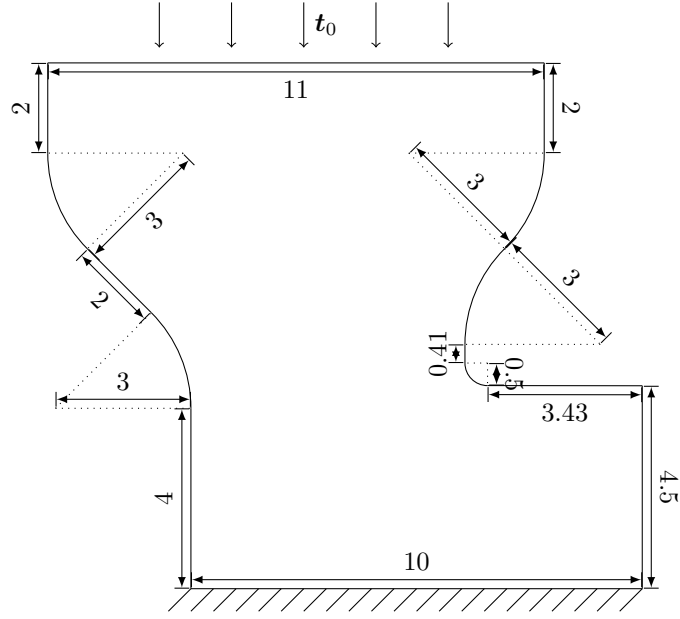

The PIHNN approach is now applied to the more realistic, non-trivial geometry depicted in \Cref{fig:geometry_test4}, which represents the section of a rail. A uniform compressive stress of 1 MPa is applied on the upper edge of the rail, while zero displacement is imposed on the lower edge. The rest of the boundary is stress-free. Due to the involved geometry and the heterogeneity of the boundary conditions, which are expected to lead to singularities as in \Cref{sec:shearstress}, the network is expanded to $5$ hidden layers with 100 complex units each. Furthermore, the initial Adam learning rate is $5\cdot 10^{-4}$ and the training is performed over 4000 epochs with 400 training points and 40 uniformly sampled test points. Also for this test, the exact solution is not known and the comparison is made against the numerical solution from the finite element method with second order elements and minimum element size 0.05 m. Training takes approximately 500 seconds and plots are shown in \Cref{fig:results_test4}. 

As for the previous tests (especially, \Cref{sec:shearstress}), the method is able to provide a good approximation of the stress field throughout the domain, except near the sharp corners at the bottom of the domain and in the region adjacent to the arc of radius $0.5$ m. On the one hand, this confirms that the presence of stress singularities prevent achieving uniform convergence, as discussed previously. On the other hand, it highlights the fact that stress concentrations induced by smooth but sharp changes in the geometry have negative impact on the quality of the approximation.

\begin{figure}[ht]
        \centering
        \begin{subfigure}[t]{0.3\textwidth}
            \centering
            \includegraphics[trim={0 0.5cm 0 0},clip,width=\linewidth]{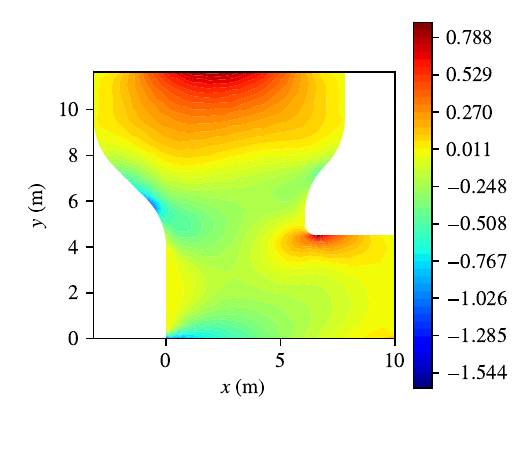}
            \caption{$\sigma^{FE}_{xx}$.}
            \label{fig:test4_xx_exact}
        \end{subfigure}
        \begin{subfigure}[t]{0.3\textwidth}
            \centering
            \includegraphics[trim={0 0.5cm 0 0},clip,width=\linewidth]{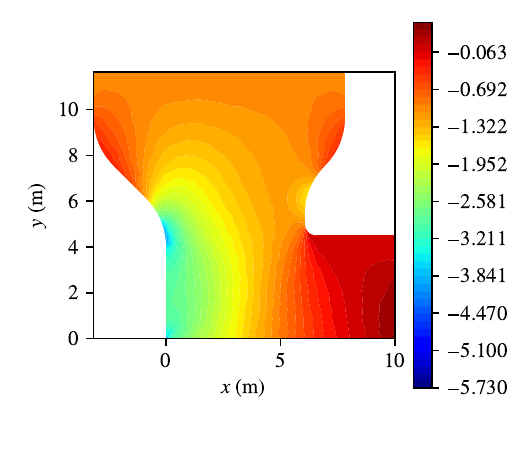}
            \caption{$\sigma^{FE}_{yy}$.}
            \label{fig:test4_yy_exact}
        \end{subfigure}
        \begin{subfigure}[t]{0.3\textwidth}
            \centering
            \includegraphics[trim={0 0.5cm 0 0},clip,width=\linewidth]{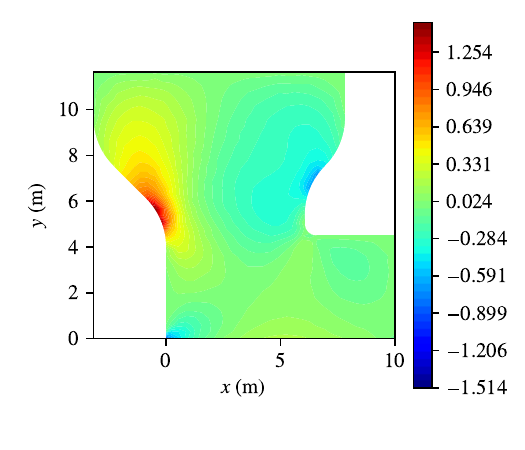}
            \caption{$\sigma^{FE}_{xy}$.}
            \label{fig:test4_xy_exact}
        \end{subfigure}
        \\
        \begin{subfigure}[t]{0.3\textwidth}
            \centering
            \includegraphics[trim={0 0.5cm 0 0},clip,width=\linewidth]{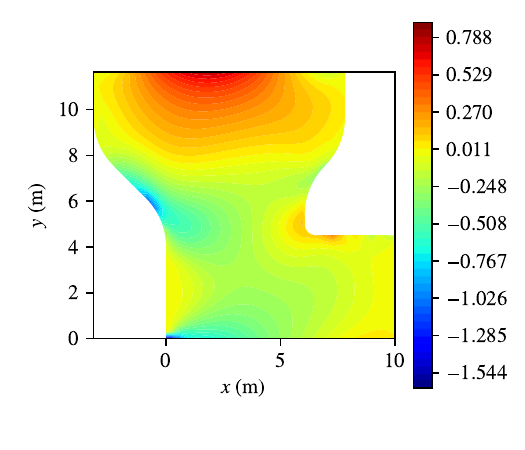}
            \caption{$\sigma^{NN}_{xx}$.}
            \label{fig:test4_xx}
        \end{subfigure}
        \begin{subfigure}[t]{0.3\textwidth}
            \centering
            \includegraphics[trim={0 0.5cm 0 0},clip,width=\linewidth]{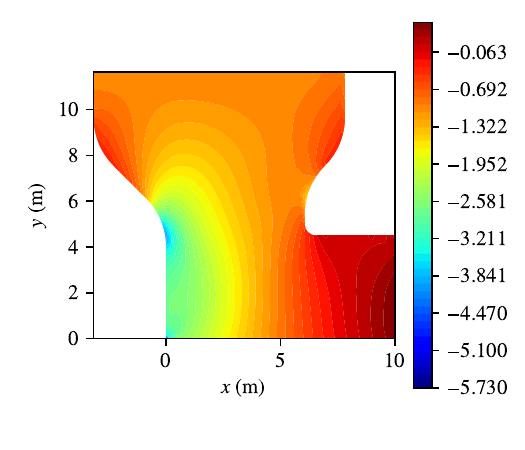}
            \caption{$\sigma^{NN}_{yy}$.}
            \label{fig:test4_yy}
        \end{subfigure}
        \begin{subfigure}[t]{0.3\textwidth}
            \centering
            \includegraphics[trim={0 0.5cm 0 0},clip,width=\linewidth]{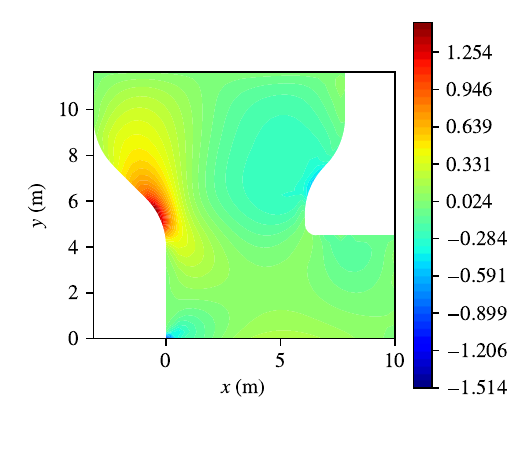}
            \caption{$\sigma^{NN}_{xy}$.}
            \label{fig:test4_xy}
        \end{subfigure}
        \\
        \begin{subfigure}[t]{0.3\textwidth}
            \centering
            \includegraphics[trim={0 0.5cm 0 0},clip,width=\linewidth]{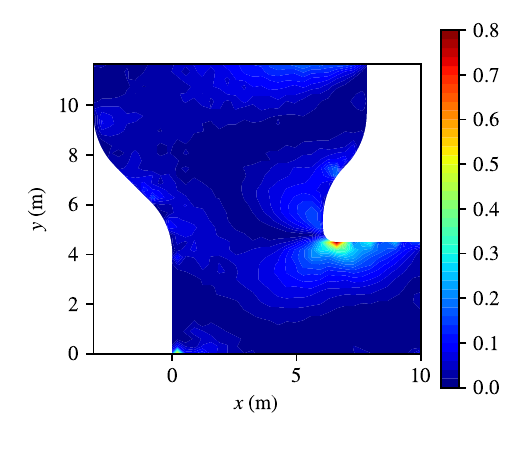}
            \caption{$\left|\sigma_{xx}^{NN}-\sigma_{xx}^{FE}\right|$.}
            \label{fig:test4_xx_error}
        \end{subfigure}
        \begin{subfigure}[t]{0.3\textwidth}
            \centering
            \includegraphics[trim={0 0.5cm 0 0},clip,width=\linewidth]{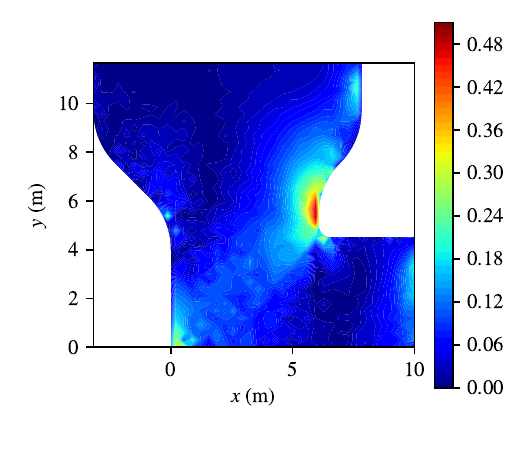}
            \caption{$\left|\sigma_{xy}^{NN}-\sigma_{xy}^{FE}\right|$.}
            \label{fig:test4_yy_error}
        \end{subfigure}
        \begin{subfigure}[t]{0.3\textwidth}
            \centering
            \includegraphics[trim={0 0.5cm 0 0},clip,width=\linewidth]{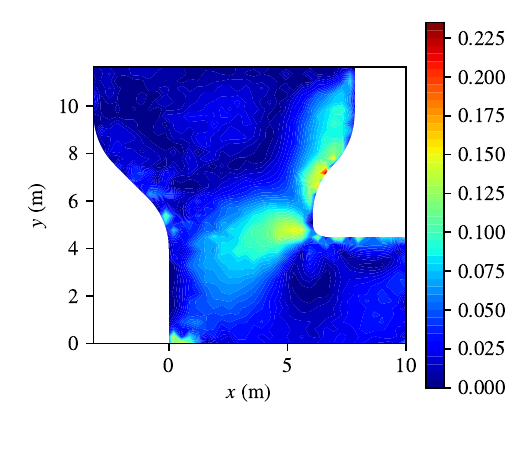}
            \caption{$\left|\sigma_{yy}^{NN}-\sigma_{yy}^{FE}\right|$.}
            \label{fig:test4_xy_error}
        \end{subfigure}
        \caption{Results of the test in \Cref{sec:railsection}. \subref{fig:test4_xx_exact},\subref{fig:test4_yy_exact},\subref{fig:test4_xy_exact}: numerical stresses obtained from the finite element method. \subref{fig:test4_xx},\subref{fig:test4_yy},\subref{fig:test4_xy}: stresses obtained from the training of the holomorphic neural network. \subref{fig:test4_xx_error},\subref{fig:test4_yy_error},\subref{fig:test4_xy_error}: errors as difference of learned and numerical solutions. To highlight the comparison, learned and numerical stresses have the same color ranges. Stress units are MPa.}
        \label{fig:results_test4}
    \end{figure}

\subsection{Domain decomposition for multiply-connected domains}\label{sec:domaindecomposition}
We recall that \Cref{eq:stressesholomorphic} as well as \Cref{theo:approximation} apply only to simply-connected domains due to the properties of holomorphic functions. Although symmetry considerations were invoked in \Cref{sec:ring,sec:squareplatewithhole} to solve  problems defined on domains containing holes, the present approach cannot be adopted in the most general case of a multiply-connected domain with arbitrary geometry. To overcome this limitation, domain decomposition (DD) methods can be employed to split any domain into smaller, simply-connected sub-domains, thus allowing the complex formulation to remain valid. DD for PINNs has been already explored in literature \cite{jagtap2020extendedphysicsinformed,shukla2021parallel} with the main purpose of enhancing performance and speed of training. Instead, the method proposed here is mostly motivated by the need to extend complex-valued methods to general domains. The same goal was also pursued in \cite{ghosh2023harmonic} for the simpler Laplace problem.\\
Suppose for simplicity that the domain is split into 2 sub-domains, therefore $\Omega = \Omega_1 \cup \Omega_2 \cup \Gamma_{1,2}$, where $\Omega_1,\Omega_2$ are the open simply-connected sub-domains and $\Gamma_{1,2}$ is the interface between the two. It is well-known (see for instance \cite{girault2009domain}) that the linear elasticity problem is well-posed if one solves the original system in \Cref{eq:linearelasticity,eq:BC} on $\Omega_1,\Omega_2$ separately plus the interface condition:
\begin{equation}\label{eq:domaindecomposition}
    \left[\bm{u}\right] = \left[\bm \upsigma\right] \cdot \bm{n} = \bm{0} \hspace{2mm} \text{ on } \Gamma_{1,2},
\end{equation}
where $\left[\cdot\right]$ is the discontinuity operator, i.e., the difference between the evaluations on one side and the other of the interface $\Gamma_{1,2}$,  and $\bm{n}$ is the normal unit vector to the curve $\Gamma_{1,2}$. \\
Hence, domain decomposition can be applied by employing two independent holomorphic neural network branches ($NN_1,NN_2$) and extending the original definition of loss function in \Cref{eq:loss} to
\begin{equation*}
    \mathcal{L}:=\alpha_{d,1}\mathcal{L}_{d,1} + \alpha_{n,1}\mathcal{L}_{n,1} +\alpha_{d,2}\mathcal{L}_{d,2} + \alpha_{n,2}\mathcal{L}_{n,2} + \alpha_{1,2} \mathcal{L}_{1,2},
\end{equation*}
where 
\begin{equation*}
    \alpha_{d,*} = \frac{l(\Gamma_d \cap \partial \Omega_*)}{l(\partial \Omega)}, \alpha_{n,*} = \frac{l(\Gamma_n \cap \partial \Omega_*)}{l(\partial \Omega)}, \alpha_{1,2} = \frac{l(\Gamma_{1,2})}{l(\partial \Omega)}  ,
\end{equation*}
\begin{align*}
\mathcal{L}_{d,*}&:= \mathbb{E}_{z \sim \mathcal{U}(\Gamma_d \cap \partial\Omega_*)}\left[\|\bm{u}_{NN_*}(z) - \bm{u}_0(z)\|_2^2\right], \\
\mathcal{L}_{n,*}&:= \mathbb{E}_{z \sim \mathcal{U}(\Gamma_n\cap \partial\Omega_*)}\left[\|\bm{\upsigma}_{NN_*}(z) \cdot \bm{n}_z - \bm{t}_0(z)\|_2^2\right], \\
\mathcal{L}_{1,2}&:= \mathbb{E}_{z \sim \mathcal{U}(\Gamma_{1,2})}\left[\|\bm{u}_{NN_1}(z)-\bm{u}_{NN_2}(z)\|_2^2 + \|(\bm{\upsigma}_{NN_1}(z) - \bm{\upsigma}_{NN_2}(z)) \cdot \bm{n}_z\|_2^2 \right],
\end{align*}
being $*=1,2$. \\ 

\begin{figure}
\centering
    \begin{subfigure}[t]{0.45\textwidth}
    \centering
    \begin{tikzpicture}
    \draw [fill=blue!50!cyan] (0,0) rectangle (60pt,60pt);
    \draw [fill=green!40!orange] (0,0) rectangle (-60pt,60pt);
    \draw [fill=red!40!orange] (0,0) rectangle (-60pt,-60pt);
    \draw [fill=yellow!40!orange] (0,0) rectangle (60pt,-60pt);
    \node[] at (30pt,30pt) {$\Omega_1$};
    \node[] at (-30pt,30pt) {$\Omega_2$};
    \node[] at (-30pt,-30pt) {$\Omega_3$};
    \node[] at (30pt,-30pt) {$\Omega_4$};
    \node[] at (40pt,5pt) {$\Gamma_{1,4}$};
    \node[] at (-40pt,5pt) {$\Gamma_{2,3}$};
    \node[] at (10pt,45pt) {$\Gamma_{1,2}$};
    \node[] at (10pt,-45pt) {$\Gamma_{3,4}$};
    \draw [fill=white] (0,0) circle (24pt);
    \draw [<->] (0,1pt) -- (0,23pt);
    \draw [<->] (-59pt,-65pt) -- (59pt,-65pt);
    \node[] at (-3pt,12pt) {$r$};
    \node[] at (0pt,-70pt) {$L$};
    \draw [<-] (-80pt,0) -- (-62pt,0);
    \draw [<-] (-80pt,20pt) -- (-62pt,20pt);
    \draw [<-] (-80pt,40pt) -- (-62pt,40pt);
    \draw [<-] (-80pt,-20pt) -- (-62pt,-20pt);
    \draw [<-] (-80pt,-40pt) -- (-62pt,-40pt);
    \phantom{\draw [->] (0,-80pt) -- (0,-62pt);}
    \draw [<-] (80pt,0) -- (62pt,0);
    \draw [<-] (80pt,20pt) -- (62pt,20pt);
    \draw [<-] (80pt,40pt) -- (62pt,40pt);
    \draw [<-] (80pt,-20pt) -- (62pt,-20pt);
    \draw [<-] (80pt,-40pt) -- (62pt,-40pt);
    \phantom{\draw [->] (0,80pt) -- (0,62pt);}
    \node[] at (-74pt,5pt) {$\bm{t}_0$};
    \node[] at (74pt,5pt) {$\bm{t}_0$};
    \phantom{\node[] at (0pt,-83pt) {$\bm{t}_0$};}
    \end{tikzpicture}
    \caption{Geometry.}
    \label{fig:geometry_test5}
    \end{subfigure}
        \begin{subfigure}[t]{0.45\textwidth}
        \centering
        \includegraphics[width=\linewidth]{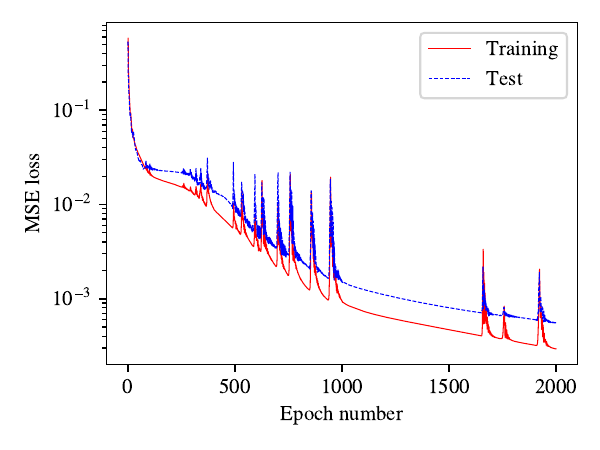}
        \caption{Learning curve. }
        \label{fig:test5_loss}
        \end{subfigure} \\
    
        \centering
        \begin{subfigure}[t]{0.3\textwidth}
            \centering
            \includegraphics[trim={0 0.5cm 0 0},clip,width=\linewidth]{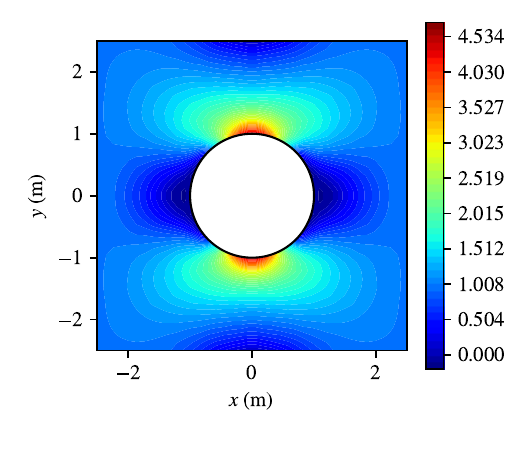}
            \caption{$\sigma^{FE}_{xx}$.}
            \label{fig:test5_xx_exact}
        \end{subfigure}
        \begin{subfigure}[t]{0.3\textwidth}
            \centering
            \includegraphics[trim={0 0.5cm 0 0},clip,width=\linewidth]{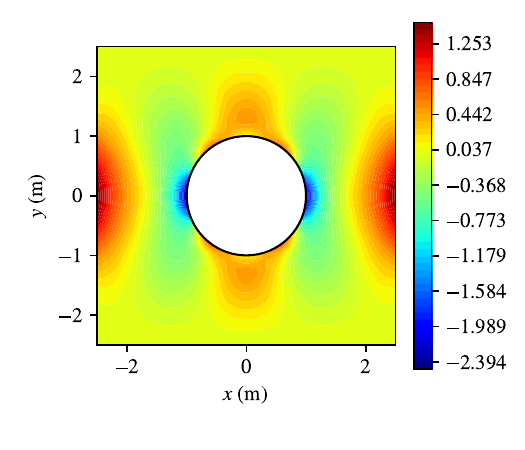}
            \caption{$\sigma^{FE}_{yy}$.}
            \label{fig:test5_yy_exact}
        \end{subfigure}
        \begin{subfigure}[t]{0.3\textwidth}
            \centering
            \includegraphics[trim={0 0.5cm 0 0},clip,width=\linewidth]{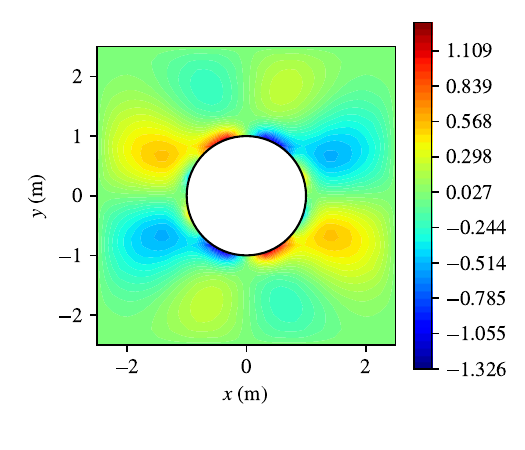}
            \caption{$\sigma^{FE}_{xy}$.}
            \label{fig:test5_xy_exact}
        \end{subfigure}
        \\
        \begin{subfigure}[t]{0.3\textwidth}
            \centering
            \includegraphics[trim={0 0.5cm 0 0},clip,width=\linewidth]{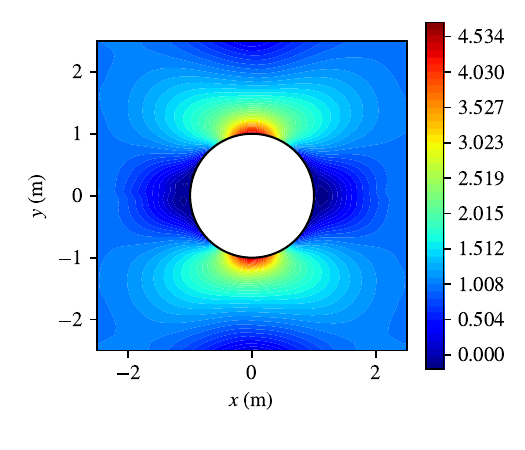}
            \caption{$\sigma^{NN}_{xx}$.}
            \label{fig:test5_xx}
        \end{subfigure}
        \begin{subfigure}[t]{0.3\textwidth}
            \centering
            \includegraphics[trim={0 0.5cm 0 0},clip,width=\linewidth]{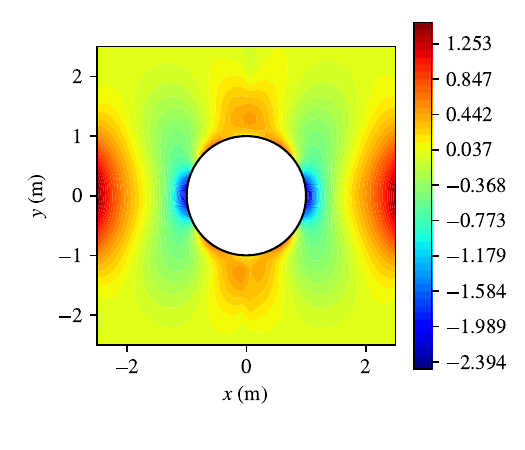}
            \caption{$\sigma^{NN}_{yy}$.}
            \label{fig:test5_yy}
        \end{subfigure}
        \begin{subfigure}[t]{0.3\textwidth}
            \centering
            \includegraphics[trim={0 0.5cm 0 0},clip,width=\linewidth]{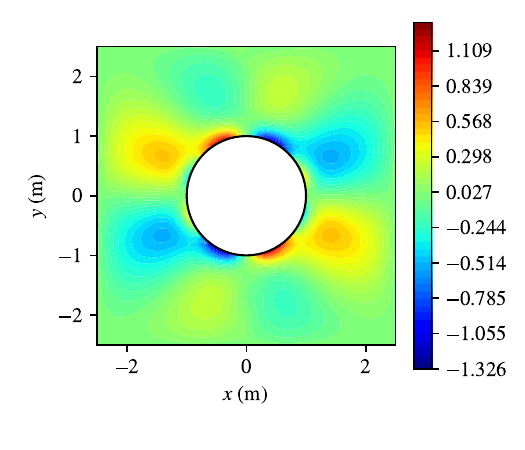}
            \caption{$\sigma^{NN}_{xy}$.}
            \label{fig:test5_xy}
        \end{subfigure}
        \\
        \begin{subfigure}[t]{0.3\textwidth}
            \centering
            \includegraphics[trim={0 0.5cm 0 0},clip,width=\linewidth]{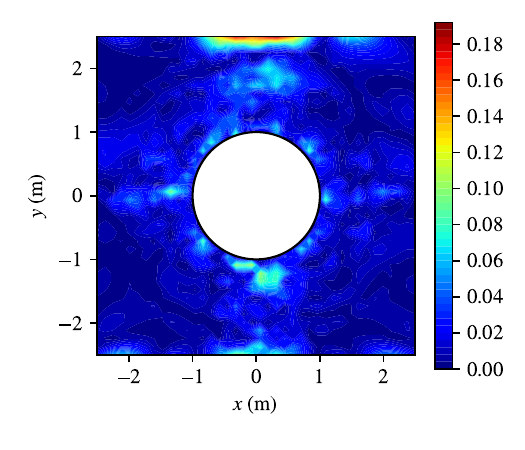}
            \caption{$\left|\sigma_{xx}^{NN}-\sigma^{FE}_{xx}\right|$.}
            \label{fig:test5_xx_error}
        \end{subfigure}
        \begin{subfigure}[t]{0.3\textwidth}
            \centering
            \includegraphics[trim={0 0.5cm 0 0},clip,width=\linewidth]{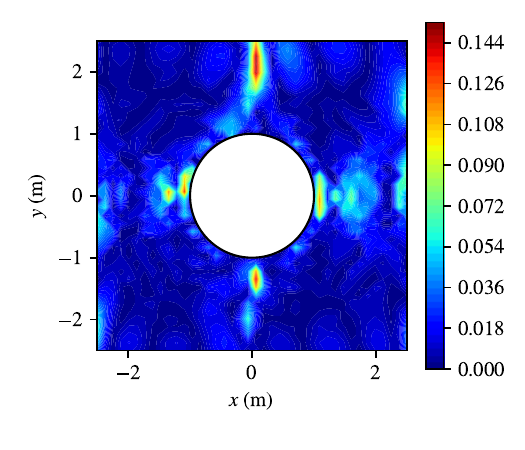}
            \caption{$\left|\sigma_{xy}^{NN}-\sigma^{FE}_{xy}\right|$.}
            \label{fig:test5_yy_error}
        \end{subfigure}
        \begin{subfigure}[t]{0.3\textwidth}
            \centering
            \includegraphics[trim={0 0.5cm 0 0},clip,width=\linewidth]{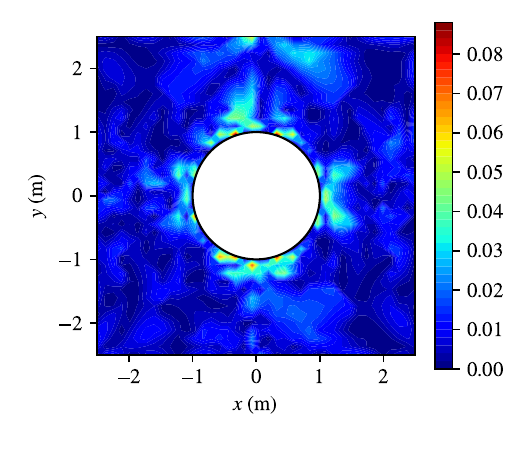}
            \caption{$\left|\sigma_{yy}^{NN}-\sigma^{FE}_{yy}\right|$.}
            \label{fig:test5_xy_error}
        \end{subfigure}

        \caption{Results from the domain decomposition method applied to the test in \Cref{sec:squareplatewithhole}. \subref{fig:geometry_test5}: same geometry as in \Cref{fig:geometry_test1}, except domain decomposition is applied by splitting it into the 4 quarters. \subref{fig:test5_loss}: training loss (red) and test loss (blue) during the training of the HNN. \subref{fig:test5_xx_exact},\subref{fig:test5_yy_exact},\subref{fig:test5_xy_exact}: numerical stresses obtained from the finite element method. \subref{fig:test2_xx},\subref{fig:test5_xy},\subref{fig:test5_xy}: stresses obtained from the training of the network. \subref{fig:test2_xx_error},\subref{fig:test5_yy_error},\subref{fig:test5_xy_error}: errors as difference of learned and numerical solutions. To highlight the comparison, learned and numerical stresses have the same color ranges. Stress units are MPa.}
        \label{fig:results_test5}

    \end{figure}

We test the domain decomposition method on the same problem considered in \Cref{sec:squareplatewithhole}, since a more objective evaluation of the performance of the method can be obtained by comparison with a previous test. Hence, we consider the same geometry and boundary conditions from \Cref{fig:geometry_test2}, and we subsequently split the domain into the four sub-domains $\Omega_1,\Omega_2,\Omega_3,\Omega_4$ depicted in \Cref{fig:geometry_test5}. As we now consider the whole plate and not just one quarter, we employ 600 training points and 60 test points uniformly generated on $\partial\Omega \cup \Gamma_{1,2} \cup \Gamma_{1,4} \cup \Gamma_{2,3}\cup \Gamma_{3,4}$, where $\Gamma_{i,j}$ denotes the edge connecting the $i$-th and $j$-th quarters. In order to have a more immediate comparison with the results from \Cref{sec:squareplatewithhole}, the following parameters are left unchanged: each DD branch has 2 hidden layer with 10 units, initial learning rate is set to 0.03, and training is performed over 2000 epochs. Training takes approximately 2 and half minutes, the learning curve is displayed in \Cref{fig:test5_loss}, and the results are shown in \Cref{fig:test5_xx,fig:test5_yy,fig:test5_xy}. \Cref{fig:test5_xx_exact,fig:test5_yy_exact,fig:test5_xy_exact} depict instead the reference numerical solution from the finite element method. In particular, they show the same values as in \Cref{fig:test2_xx_exact,fig:test2_yy_exact,fig:test2_xy_exact}, except being mirrored along each axis. 

First of all, by comparing learned and numerical stresses, one can observe that the domain decomposition method on 4 sub-domains is able to correctly reconstruct the solution on the whole plate. Furthermore, the accuracy is slightly lower but comparable to \Cref{fig:results_test2}, as can be seen from the magnitude of the errors and loss. Notice that the parameters $\alpha_*$ are scaled as the length of the total boundary and so the two losses can still be compared despite they are defined differently. Hence, although one would expect the condition at the interfaces to take longer to assimilate, it turns out that the quality of learning is almost equivalent to that of the vanilla method. 

In addition, training takes approximately 4 times longer than the test in \Cref{sec:squareplatewithhole}. This is expected since the number of training points is 3 times larger and some operations to deal with interfaces involve a slight but inevitable additional cost. Instead, a parallel implementation should allow a time comparable to that of \Cref{sec:squareplatewithhole}. 

We conclude that the DD-PIHNN approach is successful in approximating solutions to linear elasticity on multiply-connected domains. Furthermore, the adoption of this strategy requires only little extra cost for the addition of the interfaces and entails a comparable accuracy.

\section{Conclusions and outlook}
%
The resolution of plane linear elasticity problems via physics-informed holomorphic neural networks represents, to the authors' best knowledge, a completely new approach. Different aspects have been assessed in this work and the method proved to be overall effective and very fast. The main advantages include the very short training time (few minutes with a single CPU) and low memory requirements (few hundreds training points and network weights). Indeed, the implementation and hardware used for the above tests are rather simple, as to show that PIHNNs are capable of showing great performance also on a personal computer with a plain code. Other noteworthy benefits include the regularity of the solution (cf. \Cref{fig:test3_border}) and the compact definition of loss function. Furthermore, the restriction to simply-connected domains, which can be considered as an important limitation of methods based on holomorphic representations, has been successfully overcome by considering an efficient domain decomposition strategy in \Cref{sec:domaindecomposition}. Finally, the complex representation through neural networks of compact size can also possibly lead to higher interpretation of solutions, despite this has not been explored in this work.
 
We recall that PIHNNs can be adopted in a wider context and not necessarily related to linear elasticity, e.g., for solving the Laplace equation and the biharmonic problem. This can be easily achieved by considering the same architecture in \Cref{fig:diagramNN} and modifying only the final transformation. In addition, the mentioned problems are typically simpler than the one addressed in this work and results developed in \Cref{sec:universalapproximationtheorem} and \Cref{sec:weightinitialization} are general and apply regardless of the underlying PDE. 

Since PIHNNs possess a significantly different architecture compared to traditional ANNs, it was required to carry out an ad-hoc careful analysis. Despite the limited support available in the existing literature, this was successfully achieved through some relevant results as the universal approximation theorem (\Cref{theo:approximation}) and the study on the weight initialization (\Cref{sec:weightinitialization}). On the other hand, there are still some aspects that require further improvements and/or clarifications. The main drawback is presumably the lack of ability to approximate stress concentrations or singularities since the building blocks are represented by $C^\infty$ functions. This outcome was also expected from \Cref{theo:approximation} and it is similar to the limitation that B-splines are currently facing in the Kolmogorov-Arnold networks (KANs, \cite{liu2024kan}). In addition, singularities on the border due to irregular edges or discontinuous boundary conditions may have even more crucial consequences due to the strategy of holomorphic neural networks to learn only from the boundary. We do not exclude that there exist smart surrogates that can be supplied to the Kolosov-Muskhelishvili representation for an additional inductive bias in the critical regions, but this study goes beyond the scope of this research. Another current limitation is represented by the restriction to two-dimensional problems with no mass forces, despite this already represents a wide class of scenarios and it could be further extended in the future through more general complex formulations.

To summarize, the method is original and appears to be very promising, showing better performance than traditional PINNs in the tests conducted. Future work will aim to extend its applicability and further improve its performance.

\section*{Acknowledgments}
This work was supported by the Aarhus University Research Foundation through the grant no. AUFF-E-2023-9-44 "Strength of materials-informed neural networks". Prof. Henrik Myhre Jensen is gratefully acknowledged for valuable scientific discussions.

\bibliographystyle{unsrtnat}  
\bibliography{references}

\appendix
\section{Testing weight initialization}\label{sec:weight_test}
\Cref{alg:1} for the PIHNN weight initialization is tested in order to verify the correctness of the analysis in \Cref{sec:weightinitialization}. Furthermore, results can help to identify a criterion for the optimal choice of $\beta$. 

We consider a simple setting where a square $\Omega=[-1,1]\times[-1,1]$ is employed with homogeneous boundary conditions on $\Gamma_d=[-1,1]\times\{-1,1\}$ and $\Gamma_n=\{-1,1\}\times[-1,1]$. Indeed, the choice of geometry and boundary conditions do not affect the stability of $\varphi',\varphi'',\psi'$ across the layers but only influences their absolute value. Furthermore, we verified that the behaviour of the gradient of the loss remains mostly unchanged with different choices of geometry and boundary conditions as long as $\Gamma_d\neq \emptyset$. To better analyse the quantities of interest across the layers, we consider a relatively large neural network with 7 inner layers and 100 units each. $\bm{x}_0$ in \Cref{alg:1} is composed by $10^4$ points and the forward pass is executed on a batch of random sampled $10^3$ points. Sampled variances are calculated for $\beta=\beta_2,\beta_3,0.5$ and are shown in \Cref{tab:weightinitialization}.

\begin{table}[ht]
\centering
\begin{subtable}{1\textwidth}
\begin{tabular}{c |c c c c c c c}
    Layer & 1 & 2 & 3 & 4 & 5 & 6 & 7 \\
    \hline 
    $\mathbb{V}[y_l]$ & 0.42 & 0.40 & 0.41 & 0.40 & 0.41 & 0.38 & 0.42 \\ 
    \hline
    $\mathbb{V}\left[\partial \varphi/\partial w_l\right]$ & $1.2\cdot 10^{-4}$ & $2.1 \cdot 10^{-2}$ & $6.9\cdot 10^{-2}$ & $2.1\cdot 10^{-1}$ & $4.8\cdot 10^{-1}$ & $1.3\cdot 10^{0}$ &  $3.0\cdot 10^{0}$ \\ 
    \hline
    $\mathbb{V}\left[\partial \varphi'/\partial w_l\right]$ & $6.4\cdot 10^{-3}$ & $1.0 \cdot 10^{-2}$ & $1.2\cdot 10^{-2}$ & $1.2\cdot 10^{-2}$ & $1.0\cdot 10^{-2}$ & $8.3\cdot 10^{-3}$ &  $7.2\cdot 10^{-3}$ \\ 
    \hline
    $\mathbb{V}\left[\partial \varphi''/\partial w_l\right]$ & $9.1\cdot 10^{-3}$ & $6.9 \cdot 10^{-3}$ & $8.2\cdot 10^{-3}$ & $6.8\cdot 10^{-3}$ & $8.0\cdot 10^{-3}$ & $1.1\cdot 10^{-2}$ &  $9.0\cdot 10^{-3}$ \\ 
    \hline
    $\mathbb{V}\left[\partial \mathcal{L}/\partial w_l\right]$ & $4.9\cdot 10^{-6}$ & $4.0 \cdot 10^{-6}$ & $9.0\cdot 10^{-6}$ & $2.9\cdot 10^{-5}$ & $6.4\cdot 10^{-5}$ & $1.6
    \cdot 10^{-4}$ &  $3.6\cdot 10^{-4}$ \\ 
\end{tabular}
\caption{$\beta=\beta_3\approx0.41$.}\label{tab:beta3}
\end{subtable}

\begin{subtable}{1\textwidth}
\begin{tabular}{c |c c c c c c c}
    Layer & 1 & 2 & 3 & 4 & 5 & 6 & 7 \\
    \hline 
    $\mathbb{V}[y_l]$ & 0.50 & 0.42 & 0.50 & 0.48 & 0.44 & 0.55 & 0.48 \\ 
    \hline
    $\mathbb{V}\left[\partial \varphi/\partial w_l\right]$ & $5.8\cdot 10^{-4}$ & $9.1 \cdot 10^{-2}$ & $2.0\cdot 10^{-1}$ & $6.9\cdot 10^{-1}$ & $8.8\cdot 10^{-1}$ & $1.6\cdot 10^{0}$ &  $4.2\cdot 10^{0}$ \\ 
    \hline
    $\mathbb{V}\left[\partial \varphi'/\partial w_l\right]$ & $3.2\cdot 10^{-2}$ & $4.8 \cdot 10^{-2}$ & $3.6\cdot 10^{-2}$ & $3.9\cdot 10^{-2}$ & $2.5\cdot 10^{-2}$ & $1.6\cdot 10^{-2}$ &  $1.5\cdot 10^{-2}$ \\ 
    \hline
    $\mathbb{V}\left[\partial \varphi''/\partial w_l\right]$ & $8.1\cdot 10^{-2}$ & $8.5 \cdot 10^{-2}$ & $4.9\cdot 10^{-2}$ & $5.9\cdot 10^{-2}$ & $4.6\cdot 10^{-2}$ & $2.4\cdot 10^{-2}$ &  $2.5\cdot 10^{-2}$ \\ 
    \hline
    $\mathbb{V}\left[\partial \mathcal{L}/\partial w_l\right]$ & $9.8\cdot 10^{-4}$ & $1.7 \cdot 10^{-3}$ & $1.1\cdot 10^{-3}$ & $1.4\cdot 10^{-3}$ & $7.1\cdot 10^{-4}$ & $9.6\cdot 10^{-4}$ &  $1.1\cdot 10^{-3}$ \\ 
\end{tabular}
\caption{$\beta=0.5$.}\label{tab:beta05}
\end{subtable}

\begin{subtable}{1\textwidth}
\begin{tabular}{c |c c c c c c c}
    Layer & 1 & 2 & 3 & 4 & 5 & 6 & 7 \\
    \hline 
    $\mathbb{V}[y_l]$ & 0.70 & 0.63 & 0.62 & 0.66 & 0.57 & 0.52 & 0.66 \\ 
    \hline
    $\mathbb{V}\left[\partial \varphi/\partial w_l\right]$ & $5.1\cdot 10^{-3}$ & $8.0 \cdot 10^{-2}$ & $2.7\cdot 10^{-1}$ & $6.5\cdot 10^{-1}$ & $1.6\cdot 10^{0}$ & $4.8\cdot 10^{0}$ &  $4.0\cdot 10^{0}$ \\ 
    \hline
    $\mathbb{V}\left[\partial \varphi'/\partial w_l\right]$ & $1.2\cdot 10^{-1}$ & $2.0 \cdot 10^{-1}$ & $2.3\cdot 10^{-1}$ & $2.2\cdot 10^{-1}$ & $2.3\cdot 10^{-1}$ & $2.7\cdot 10^{-1}$ &  $1.3\cdot 10^{-1}$ \\ 
    \hline
    $\mathbb{V}\left[\partial \varphi''/\partial w_l\right]$ & $1.8\cdot 10^{1}$ & $1.0 \cdot 10^{1}$ & $6.6\cdot 10^{1}$ & $3.3\cdot 10^{1}$ & $2.2\cdot 10^{1}$ & $1.5\cdot 10^{1}$ &  $6.4\cdot 10^{-1}$ \\ 
    \hline
    $\mathbb{V}\left[\partial \mathcal{L}/\partial w_l\right]$ & $8.9\cdot 10^{1}$ & $2.4 \cdot 10^{1}$ & $1.4\cdot 10^{1}$ & $8.1\cdot 10^{0}$ & $3.2\cdot 10^{0}$ & $2.1\cdot 10^{0}$ &  $9.0\cdot 10^{-1}$ \\ 
\end{tabular}
\caption{$\beta= \beta_2 \approx 0.62$.}\label{tab:beta2}
\end{subtable}

\caption{Sampled variances after PIHNN weight initialization (\Cref{alg:1}) with different parameters. \subref{tab:beta3},\subref{tab:beta05},\subref{tab:beta2}: standard configuration with different choices of $\beta$ and no assumption of normal distribution (i.e., $M_e=L+1$).}
\label{tab:weightinitialization}
\end{table}

Some important observations can be drawn to support the arguments from \Cref{sec:weightinitialization}. First of all, the forward pass maintains stable for every choice of $\beta$. In particular, $\mathbb{V}[y_l] \approx \beta$ as expected. The analysis of the backward pass is instead more involved since, as already mentioned, it is impossible to guarantee stability of all derivatives simultaneously. This is confirmed by results: in \Cref{tab:beta3}, the choice of $\beta$ guarantees stability of $\partial \varphi''/\partial w_l$ whereas $\partial \varphi/\partial w_l$ decreases quickly through the backpropagation. On the other hand, the choice of $\beta$ in \Cref{tab:beta2} guarantees stability of $\partial \varphi'/\partial w_l$ while, as expected, it entails vanishing $\partial \varphi/\partial w_l$ and exploding $\partial \varphi''/\partial w_l$. In conclusion, as mentioned in \Cref{sec:weightinitialization}, one has to find a suitable compromise that allows to control the gradient of the loss which, by definition, is represented by a combination of the previous terms. We choose $\beta=0.5$ since it proved to guarantee stability with all types of boundary conditions as in \Cref{tab:beta05}.

Choices of $\beta > \beta_2$ are not displayed since they usually lead to exploding gradients and overflows. The same applies to the complex-valued Xavier and He initialization \cite{trabelsi2017deep}. This aspect highlights once more the importance of the analysis in \Cref{sec:weightinitialization} and the prediction that admissible values for $\beta$ lie in a relatively short range.

In \Cref{tab:gauss} we demonstrate the accuracy of the assumption of normal distribution when it is applied from the third layer and $\beta=0.5$. As expected, it gives comparable results to \Cref{tab:beta05}. 

\begin{table}[ht]
\begin{tabular}{c |c c c c c c c}
    Layer & 1 & 2 & 3 & 4 & 5 & 6 & 7 \\
    \hline 
    $\mathbb{V}[y_l]$ & 0.45 & 0.52 & 0.41 & 0.38 & 0.51 & 0.52 & 0.60 \\ 
    \hline
    $\mathbb{V}\left[\partial \varphi/\partial w_l\right]$ & $8.2\cdot 10^{-4}$ & $4.7 \cdot 10^{-2}$ & $1.6\cdot 10^{-2}$ & $4.7\cdot 10^{-2}$ & $1.7\cdot 10^{0}$ & $3.1\cdot 10^{0}$ &  $8.2\cdot 10^{0}$ \\ 
    \hline
    $\mathbb{V}\left[\partial \varphi'/\partial w_l\right]$ & $2.0\cdot 10^{-2}$ & $2.8 \cdot 10^{-2}$ & $3.7\cdot 10^{-2}$ & $3.8\cdot 10^{-2}$ & $5.1\cdot 10^{-2}$ & $3.9\cdot 10^{-2}$ &  $4.1\cdot 10^{-2}$ \\ 
    \hline
    $\mathbb{V}\left[\partial \varphi''/\partial w_l\right]$ & $9.8\cdot 10^{-2}$ & $1.1 \cdot 10^{-1}$ & $1.0\cdot 10^{-1}$ & $8.1\cdot 10^{-2}$ & $1.2\cdot 10^{-1}$ & $8.3\cdot 10^{-2}$ &  $9.0\cdot 10^{-2}$ \\ 
    \hline
    $\mathbb{V}\left[\partial \mathcal{L}/\partial w_l\right]$ & $3.0\cdot 10^{-4}$ & $4.5 \cdot 10^{-4}$ & $4.1\cdot 10^{-4}$ & $4.5\cdot 10^{-4}$ & $5.5\cdot 10^{-4}$ & $5.6\cdot 10^{-4}$ &  $8.7\cdot 10^{-4}$ \\ 
\end{tabular}
\caption{As \Cref{tab:beta05} except the assumption of normal distribution from the third layer ($M_{e}=3$).}\label{tab:gauss}
\end{table}

To conclude, we note that different considerations apply to the stress-only configuration: since $\varphi'$ is the output of the network and $\mathcal{L}$ is a function of $\varphi',\varphi''$, the correct $\beta$ lies in $[\beta_2,\beta_1]$ rather than $[\beta_3,\beta_1]$. Therefore, $\beta=0.5$ is expected to give a vanishing loss gradient in contrast with \Cref{tab:beta05}.

\end{document}